\def\A{{\mathcal A}}
\def\H{{\mathcal H}}
\def\E{{\mathcal E}}
\def\O{{\mathcal O}}
\def\U{{\mathcal U}}
\def\F{{\mathcal F}}
\def\S{{\mathcal S}}
\def\D{{\mathcal D}}
\begin{document}
\pagestyle{plain}
\title{Counting Popular Matchings in House Allocation Problems}
\author{Rupam Acharyya, Sourav Chakraborty, Nitesh Jha}
\institute{
  Chennai Mathematical Institute\\
  Chennai, India\\
  \email{\{rupam,sourav,nj\}@cmi.ac.in}
}
\maketitle


\begin{abstract}
We study the problem of counting the number of \textit{popular matchings} in a given instance. A popular matching instance consists of agents $\A$ and houses $\H$, where each agent ranks a subset of houses according to their preferences. A \textit{matching} is an assignment of agents to houses. A matching $M$ is \textit{more popular} than matching $M'$ if the number of agents that prefer $M$ to $M'$ is more than the number of people that prefer $M'$ to $M$. A matching $M$ is called \textit{popular} if there exists no matching more popular than $M$. McDermid and Irving gave a poly-time algorithm for counting the number of popular matchings when the preference lists are strictly ordered. \\

We first consider the case of ties in preference lists. Nasre proved that the problem of counting the number of popular matching is \#P-hard when there are ties. We give an FPRAS for this problem. \\

We then consider the popular matching problem where preference lists are strictly ordered but each house has a capacity associated with it. We give a \textit{switching graph characterization} of popular matchings in this case. Such characterizations were studied earlier for the case of strictly ordered preference lists (McDermid and Irving) and for preference lists with ties (Nasre). We use our characterization to prove that counting popular matchings in capacitated case is \#P-hard.
\end{abstract}

\newpage


\section{Introduction}
\label{sec:intro}
A $\textit{popular matching problem}$ instance $I$ comprises a set $\A$ of $\textit{agents}$ and a set $\H$ of $\textit{houses}$. Each agent $a$ in $\A$ ranks (numbers) a subset of houses in $\H$ (lower rank specify higher preference). The ordered list of houses ranked by $a \in \A$ is called $a$'s \textit{preference list}. For an agent $a$, let $E_a$ be the set of pairs $(a,h)$ such that the house $h$ appears on $a$'s preference list. Define $E = \cup_{a \in \A} E_a$. The problem instance $I$ is then represented by a bipartite graph $G = (\A \cup \H, E)$. A $matching$ $M$ of $I$ is a matching of the bipartite graph $G$. We use $M(a)$ to denote the house assigned to agent $a$ in $M$ and $M(h)$ to denote the agent that is assigned house $h$ in $M$. An agent $\textit{prefers}$ a matching $M$ to a matching $M'$ if $(i)$ $a$ is matched in $M$ and unmatched in $M'$, or $(ii)$ $a$ is matched in both $M$ and $M'$ but $a$ prefers the house $M(a)$ to $M'(a)$. Let $\phi(M,M')$ denote the number of agents that prefer $M$ to $M'$. We say $M$ is $\textit{more popular than}$ $M'$ if $\phi(M,M') > \phi(M',M)$, and denote it by $M \succ M'$. A matching $M$ is called $\textit{popular}$ if there exists no matching $M'$ such that $M' \succ M$.

The popular matching problem was introduced in \cite{gardenfors1975match} as a variation of the stable marriage problem \cite{gale1962college}. The idea of popular matching has been studied extensively in various settings in recent times \cite{DBLP:journals/siamcomp/AbrahamIKM07,DBLP:journals/jda/SngM10,DBLP:journals/jco/McDermidI11,DBLP:conf/sigecom/Mahdian06,DBLP:journals/tcs/KavithaMN11,DBLP:conf/latin/McCutchen08,DBLP:conf/stacs/Nasre13}, mostly in the context where only one side has preference of the other side but the other side has no preference at all. We will also focus on this setting. Much of the earlier work focuses on finding efficient algorithms to output a popular matching, if one exists.

The problem of counting the number of ``solutions" to a combinatorial question falls into the complexity class \#P. An area of interest that has recently gathered a certain amount of attention is the problem of counting stable matchings in graphs. The Gale-Shapely algorithm \cite{gale1962college} gives a simple and efficient algorithm to output a stable matching, but counting them was proved to be \#P-hard in \cite{DBLP:journals/siamcomp/IrvingL86}. Bhatnagar, Greenberg and Randall \cite{DBLP:conf/soda/BhatnagarGR08} showed that the random walks on the \textit{stable marriage lattice} are slowly mixing, even in very restricted versions of the problem. \cite{DBLP:conf/approx/CheboluGM10} gives further evidence towards the conjecture that there may not exist an FPRAS at all for this problem.

Our motivation for this study is largely due to the similarity of structures between stable matchings and popular matchings (although no direct relationship is known). The interest is further fueled by the existence of a linear time algorithm to exactly count the number of popular matchings in the standard setting \cite{DBLP:journals/jco/McDermidI11}. We look at generalizations of the standard version - preferences with ties and houses with capacities. In the case where preferences could have ties, it is already known that the counting version is \#P-hard \cite{DBLP:conf/stacs/Nasre13}. We give an FPRAS for this problem. In the case where houses have capacities, we prove that the counting version is \#P-hard. While the FPRAS for the case of ties is achieved via a reduction to a well known algorithm, the \#P-hardness for the capacitated case is involving, making it the more interesting setting of the problem.

We now formally describe the different variants of the popular matching problem (borrowing the notation from \cite{DBLP:journals/jda/SngM10}) and also describe our results alongside.

\ \\\noindent
$\textbf{House Allocation problem (HA)}$
These are the instances $G = (\A \cup \H, E)$ where the preference list of each agent $a \in \A$ is a linear order. Let $n = |\A| + |\H|$ and $m = |E|$. In \cite{DBLP:journals/siamcomp/AbrahamIKM07}, Abraham et~al.\ give a complete characterization of popular matchings in an HA instance, using which they give an $O(m + n)$  time algorithm to check if the instance admits a popular matching and to obtain the largest such matching, if one exists. The question of counting popular matchings was first addressed in \cite{DBLP:journals/jco/McDermidI11}, where McDermid et~al.\ give a new characterization by introducing a powerful structure called the $\textit{switching graph}$ of an instance. The switching graph encodes all the popular matchings via $\textit{switching paths}$ and $\textit{switching cycles}$. Using this structure, they give a linear time algorithm to count the number of popular matchings. 

\ \\\noindent
$\textbf{House Allocation problem with Ties (HAT)}$ An instance $G = (\A \cup \H, E)$ of HAT can have applicants whose preference list contains ties. For example, the preference list of an agent could be $[h_3, (h_1, h_4), h_2]$, meaning, house $h_3$ gets rank 1, houses $h_1$ and $h_4$ get a tied rank 2 and house $h_2$ gets the rank 3. A characterization for popular matchings in HAT was given in \cite{DBLP:journals/siamcomp/AbrahamIKM07}. They use their characterization to give an $O(\sqrt{n}m)$ time algorithm to solve the maximum cardinality popular matching problem. We outline their characterization briefly in Section \ref{sec:HAT} where we consider the problem of counting popular matchings in HAT. In \cite{DBLP:conf/stacs/Nasre13}, Nasre gives a proof of $\#$P-hardness of this problem. We give an FPRAS for this problem by reducing it to the problem of counting perfect matchings in a bipartite graph. 

\ \\\noindent
$\textbf{Capacitated House Allocation problem (CHA)}$ A popular matching instance in CHA has a \textit{capacity} $c_i$ associated with each house $h_i \in \H$, allowing at most $c_i$ agents to be matched to house $h_i$. The preference list of each agent is strictly ordered. A characterization for popular matchings in CHA was given in \cite{DBLP:journals/jda/SngM10}, along with an algorithm to find the largest popular matching (if one exists) in time $O(\sqrt{C}n_1 + m)$, where $n_1 = |\A|$, $m = |E|$ and $C$ is the total capacity of the houses. In Section $\ref{sec:CHA}$, we consider the problem counting popular matchings in CHA. We give a switching graph characterization of popular matchings in CHA. This is similar to the switching graph characterization for HA in \cite{DBLP:journals/jco/McDermidI11}. Our construction is also motivated from  \cite{DBLP:conf/stacs/Nasre13}, which gives a switching graph characterization of HAT. We use our characterization to prove that it is \#P-Complete to compute the number of popular matchings in CHA. 

\ \\\noindent
$\textbf{Remark:}$ A natural reduction exists from a CHA instance $G = (\A \cup \H, E)$ to an HAT instance. The reduction is as follows. Treat each house $h_i \in \H$ with capacity $c$ as $c$ different houses $h_i^1, \ldots, h_i^c$ of unit capacity, which are always tied together and appear together wherever $h_i$ appears in any agent's preference list. Let the HAT instance thus obtained be $G'$. It is clear that every popular matching of $G$ is a popular matching of $G'$. Hence, for example, an algorithm which finds a maximum cardinality popular matching for HAT can be used to find a maximum cardinality popular matching for the CHA instance $G$. In the context of counting, it is important to note that one popular matching of $G$ may translate to many popular matchings in $G'$. It is not clear if there is a useful map between these two sets that may help in obtaining either hardness or algorithmic results for counting problems.


\section{Counting in House Allocation problem with Ties}
\label{sec:HAT}
In this section we consider the problem of counting the number of popular matchings in House Allocation problem with Ties (HAT). We first describe the characterization given in \cite{DBLP:journals/siamcomp/AbrahamIKM07} here using similar notations. Let $G = (\A \cup \H, E)$ be an HAT instance. For any agent $a \in \A$, let $f(a)$ define the set of first choices of $a$. For any house $h \in \H$, define $f(h) := \{a \in \A, f(a) = h \}$. A house $h$ for which $f(h) \neq \phi$ is called an $f$-house. To simplify the definitions, we add a unique last-resort house $l(a)$ with lowest priority for each agent $a \in A$. This forces every popular matching to be an applicant complete matching.

\begin{definition}
The \textbf{first choice graph} of $G$ is defined to be $G_1 = (\A \cup \H, E_1)$, where $E_1$ is the set of all rank one edges.
\end{definition}

\begin{lemma}
\label{pop-match-charac-1}
If $M$ is a popular matching of $G$, then $M \cap E_1$ is a maximum matching of $G_1$. 
\end{lemma}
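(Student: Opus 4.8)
The plan is to argue by contradiction. Assume $M$ is popular but $M_1 := M \cap E_1$ is not a maximum matching of the first-choice graph $G_1$. Then, taking a maximum matching $N$ of $G_1$ and inspecting $M_1 \oplus N$, I obtain an augmenting path $P$ for $M_1$ lying entirely inside $G_1$; hence every edge of $P$ is a rank-one edge. Since $G_1$ is bipartite and $P$ is augmenting, its two endpoints lie on opposite sides: one endpoint is an agent $a_0$ that is unmatched in $M_1$, so in $M$ the agent $a_0$ sits on a house that is not among its first choices (here I use that the last-resort houses force $M$ to be applicant-complete, so $a_0$ really is matched, just not to a rank-one house), and the other endpoint is an $f$-house $h^*$ that is unmatched in $M_1$.

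Next I would form the matching $M'$ obtained from $M$ by switching along $P$, that is $M' = M \oplus P$ (deleting the $M$-edge incident to $h^*$ if present, to keep $M'$ a matching), and compare $M'$ with $M$ house by house. The endpoint $a_0$ moves from a non-first-choice house to the rank-one house at the near end of $P$, so $a_0$ strictly prefers $M'$. Every internal agent of $P$ moves from one rank-one house to another rank-one house; because we are in the HAT setting these two houses carry a tied rank, so each such agent is \emph{indifferent} between $M$ and $M'$. Thus along the path exactly one agent gains and none lose, giving $\phi(M',M) \ge 1$.

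The crux is the house-endpoint $h^*$. If $h^*$ is unmatched in $M$, then switching along $P$ harms nobody, so $\phi(M,M') = 0$ and $M' \succ M$, contradicting popularity and finishing the argument. The main obstacle is the remaining case, where $h^*$ is matched in $M$ to some agent $a^*$ through an edge of rank at least $2$ (so $a^*$ is itself exposed in $G_1$): switching displaces $a^*$, who then strictly prefers $M$, and the naive count degrades to a tie ($1$ versus $1$) rather than a strict win. To defeat this I would show that the augmenting path can always be chosen so that its house-endpoint is unmatched in $M$ — for instance via a deficiency or Gallai--Edmonds analysis of $G_1$ relative to $M$, exploiting that $M$ saturates every agent while $M_1$ leaves strictly more $f$-houses exposed than a maximum matching of $G_1$ would — or, failing that, by prolonging the exchange along a maximal alternating walk through $a^*$ in $G$ so that the displaced agent is itself re-promoted to one of its own first choices and ends up indifferent. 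Either route restores $\phi(M',M) \ge 1 > 0 = \phi(M,M')$ and delivers the desired contradiction.
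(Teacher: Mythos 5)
You should first know that the paper itself gives no proof of this lemma at all: it is imported verbatim from Abraham et al.\ \cite{DBLP:journals/siamcomp/AbrahamIKM07}, so your attempt has to be judged on its own merits. Your opening moves are the standard ones and are fine: take an $M_1$-augmenting path $P$ in $G_1$ where $M_1 = M\cap E_1$, promote along it, note that internal agents are indifferent (both their houses are tied first choices) while the agent endpoint strictly gains. But the case you yourself call ``the crux'' --- the house endpoint $h^*$ being occupied in $M$ through a rank-$\geq 2$ edge --- is not a corner case; it is the entire content of the lemma, and neither of your two proposed escapes closes it. Escape (a) is in fact false: the augmenting path cannot in general be chosen to end at a house unmatched in $M$. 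Take two agents with preferences $a_1 : h_1 \succ h_2$ and $a_2 : h_2 \succ h_1$, and $M = \{(a_1,h_2),(a_2,h_1)\}$. Here $M\cap E_1 = \emptyset$ is not maximum, the only augmenting paths are the single edges $(a_1,h_1)$ and $(a_2,h_2)$, and both of their house endpoints are matched in $M$. This $M$ is indeed not popular, but the only witness is $M' = \{(a_1,h_1),(a_2,h_2)\}$, which uses \emph{both} augmenting paths simultaneously --- the agent evicted by each path is exactly the agent promoted by the other --- whereas any single-path move yields only a $1$-versus-$1$ tie. So no refinement of how one chooses a single augmenting path can complete your argument; a global construction combining several paths is unavoidable.

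Escape (b), cascading the eviction, is the germ of a correct proof but is not routine, and you have not carried it out. The evicted agent $a^*$ sits on a rank-$\geq 2$ edge and is therefore exposed in $M_1$, but re-promoting it requires freeing one of its first-choice houses, which forces you to continue along an alternating structure in $E_1 \cup M$; the naive cascade gets stuck precisely when the first evicted agent cannot be rehomed (again a tie), and whether it gets stuck depends on a global choice --- e.g.\ on which maximum matching of $G_1$ you take the symmetric difference with, as different choices of that matching can make the same cascade succeed or fail on the same instance. Proving that some such global choice always produces strictly more gainers than losers is the actual substance of the lemma, and it is missing. As written, your proposal establishes the lemma only under the additional hypothesis that some augmenting path ends at a house unmatched in $M$.
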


Let $M_1$ be any maximum matching of $G_1$. The matching $M_1$ can be used to identify the houses $h$ that are always matched to an agent in the set $f(h)$. In this direction, we observe that $M_1$ defines a partition of the vertices $\A \cup \H$ into three disjoint sets - $\textit{even}$, $\textit{odd}$ and $\textit{unreachable}$: a vertex is $\textit{even}$ (resp.\ $\textit{odd}$) if there is an even (resp.\ odd) length alternating path from an unmatched vertex (with respect to $M_1$) to $v$; a vertex $v$ is $\textit{unreachable}$ if there is no alternating path from an unmatched vertex to $v$. Denote the sets $\textit{even}$, $\textit{odd}$ and $\textit{unreachable}$ by $\E$, $\O$ and $\U$ respectively. The following is a well-known theorem in matching theory \cite{MR859549}.

\begin{lemma}[Gallai-Edmonds Decomposition]
\label{lemma:gallai-edmonds}
Let $G_1$ and $M_1$ define the partition $\E$, $\O$ and $\U$ as above. Then, 
\begin{enumerate}
\item [(a)] The sets $\E$, $\O$ and $\U$ are pairwise disjoint, and every maximum matching in $G_1$ partitions the vertices of $G_1$ into the same partition of even, odd and unreachable vertices.
\item [(b)] In any maximum matching of $G_1$, every vertex in $\U$ is matched with another vertex in $\U$, and every vertex in $\O$ is matched with some vertex in $\E$. No maximum matching contains an edge between a vertex in $\O$ and a vertex in $\O \cup \U$. The size of a maximum matching is $|\O| + |\U|/2$.
\item [(c)] $G_1$ contains no edge connecting a vertex in $\E$ with a vertex in $\U$.
\end{enumerate}
\end{lemma}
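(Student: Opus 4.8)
The plan is to prove the bipartite specialization of the Gallai--Edmonds theorem that is actually needed here, since $G_1 = (\A \cup \H, E_1)$ is bipartite and the ``blossom'' complications of the general theorem do not arise. Throughout I would use only the fact that $M_1$ is maximum if and only if $G_1$ has no $M_1$-augmenting path, i.e.\ no alternating path joining two $M_1$-exposed vertices. Note first that every exposed vertex is even (reached by the length-$0$ path from itself), so every vertex of $\O$ and of $\U$ is matched. I would organize the argument into four steps: (1) well-definedness of the labels; (2) the local structure claims of (b) and (c) for $M_1$; (3) the size formula; and (4) invariance of the partition under the choice of maximum matching, which then promotes (b) and (c) to \emph{all} maximum matchings.

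The heart of the proof, and the step I expect to be the main obstacle, is well-definedness: no vertex is simultaneously even and odd. Suppose a vertex $v$ were reachable both by an even alternating path $P$ from an exposed vertex $u$ and by an odd alternating path $Q$ from an exposed vertex $w$. At $v$ the path $P$ arrives along a matching edge (or $v=u$ is exposed) while $Q$ arrives along a non-matching edge, so the two arrivals are compatible for alternation. Splicing $P$ and $Q$ at their first common vertex and using that $G_1$ is bipartite (hence free of odd cycles) yields, after a standard parity analysis, an $M_1$-augmenting path between two distinct exposed vertices, contradicting maximality. Carrying out the splice cleanly -- controlling where $P$ and $Q$ first meet and verifying that the resulting object genuinely alternates and is augmenting -- is the technical crux.

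Given well-definedness, parts (b) and (c) follow by one-step extension arguments applied to $M_1$. If $u \in \U$ then $u$ is matched, and were $M_1(u) = v$ with $v \notin \U$, an alternating path reaching $v$ would extend across the edge $vu$ to reach $u$, contradicting $u \in \U$; hence $\U$ is matched within itself. If $v \in \O$ is reached by an odd path ending in a non-matching edge, extending by the matching edge $v\,M_1(v)$ gives an even path, so $M_1(v) \in \E$. The same extension shows no matching edge joins two even vertices and that $G_1$ has no $\E$--$\U$ edge: extending a path reaching an even vertex $a$ along any non-matching edge $ah$ reaches $h$ by an odd path, so $h$ cannot be unreachable, giving (c); the absence of $\O$--$\O$ and $\O$--$\U$ edges in $M_1$ follows likewise. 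For the size formula, every vertex of $\O$ is matched to a distinct vertex of $\E$ and every vertex of $\U$ is matched within $\U$, and by the above these exhaust the matching edges, giving $|\O| + |\U|/2$.

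Finally, for invariance I would characterize the classes without reference to a fixed matching. I claim $\E$ equals the set $X$ of vertices exposed by \emph{some} maximum matching. If $v \in \E$ via an even path $P$ from an exposed $u$, then $M_1 \mathbin{\triangle} P$ is a maximum matching exposing $v$, so $v \in X$. Conversely, if $v$ is exposed by a maximum matching $M'$ and matched by $M_1$, the component of $M_1 \mathbin{\triangle} M'$ containing $v$ is an alternating path of even length whose other endpoint $u$ is $M_1$-exposed; reading this path from $u$ exhibits an even $M_1$-alternating path to $v$, so $v \in \E$. Since $X$ is intrinsic, $\E$ is independent of the maximum matching. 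Then $\O$ is exactly the set of non-even vertices adjacent in $G_1$ to $\E$ (each odd vertex has its even path-predecessor as such a neighbor, while by (c) no unreachable vertex is adjacent to $\E$), and $\U$ is the remainder; as adjacency and $\E$ are fixed, so are $\O$ and $\U$. This completes (a), and applying steps (2)--(3) to each maximum matching using the now-fixed partition yields (b) and (c) in full.
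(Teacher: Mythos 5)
Your proposal is correct, but the comparison here is of a special kind: the paper does not prove this lemma at all. It states it as ``a well-known theorem in matching theory'' with a citation to Lov\'asz--Plummer \cite{MR859549} and only uses its conclusions downstream, so any self-contained argument is by definition a different route. Your proof of the bipartite specialization is the standard textbook one, and its skeleton is sound: disjointness of $\E$ and $\O$ via an augmenting-path contradiction, one-step path extensions for (b) and (c), the edge count $|\O|+|\U|/2$, and --- the right key idea for (a) --- the matching-independent characterization of $\E$ as the vertices exposed by \emph{some} maximum matching, proved by flipping an even path in one direction and taking the symmetric difference $M_1 \mathbin{\triangle} M'$ in the other; once $\E$ is intrinsic, $\O$ (the non-even vertices adjacent to $\E$, using (c)) and $\U$ (the rest) are intrinsic as well. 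On the step you rightly flag as the crux: the splice-at-first-common-vertex argument does go through, and the one parity case that looks troublesome (both paths arriving at the meeting vertex $x$ along matching edges) is in fact vacuous, since $x$ has a unique matching partner, which would then lie on both paths \emph{before} $x$, contradicting firstness. But you can avoid splicing entirely by orienting non-matching edges of $G_1$ from $\A$ to $\H$ and matching edges from $\H$ to $\A$: alternating paths from exposed vertices become directed paths, your $P$ concatenated with the reversal of $Q$ becomes a directed walk from an exposed $\A$-vertex to an exposed $\H$-vertex, every walk contains a directed path, and every directed path between exposed vertices on opposite sides alternates automatically, i.e.\ is augmenting. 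In short, the paper's citation buys brevity; your argument buys self-containedness at the cost of roughly a page of standard matching theory, and with the directed-graph formulation that page shrinks considerably.
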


We show the decomposition of $G_1$ in Figure $\ref{fig:gallai-1}$, where we look at the bipartitions of $\U$, $\O$, and $\E$. Since $G_1$ only contained edges resulting from first-choices, every house in $\U_r$ and $\O_r$ is an $f$-house. From Lemma $\ref{lemma:gallai-edmonds}$, each such house $h \in \U_r \cup \O_r$ is matched with an agent in $f(h)$ in every maximum matching of $G_1$, and correspondingly in every popular matching of $G$ (Lemma $\ref{pop-match-charac-1}$).

\begin{figure}
\centering

\begin{tikzpicture} [x=0.7cm,y=0.7cm]
\tikzstyle{every node}=[font=\normalfont]
\draw [help lines, gray!0] (1cm,1) grid (5,8);

\node at (0.5,7.7) {\large{$\A$}};
\node at (4.5,7.7) {\large{$\H$}};

\draw (0,6) rectangle (1,7);
\node at (0.5,6.5) {$\U_l$};
\draw (4,6) rectangle (5,7);
\node at (4.5,6.5) {$\U_r$};
\draw (1.2, 6) -- (3.8, 6);
\draw (1.2, 7) -- (3.8, 7);
\node at (2.5,6.6) {$\vdots$};

\draw (0,4.5) rectangle (1,5.5);
\node at (0.5,5) {$\O_l$};
\draw (4,3.5) rectangle (5,5.5);
\node at (4.5,4.5) {$\E_r$};
\draw (1.2, 5.5) -- (3.8, 5.5);
\draw (1.2, 4.5) -- (3.8, 4.5);
\node at (2.5,5.1) {$\vdots$};

\draw (0,1) rectangle (1,3);
\node at (0.5,2) {$\E_l$};
\draw (4,2) rectangle (5,3);
\node at (4.5,2.5) {$\O_r$};
\draw (1.2,3) -- (3.8,3);
\draw (1.2,2) -- (3.8,2);
\node at (2.5,2.6) {$\vdots$};

\end{tikzpicture}

\caption{Gallai-Edmonds decomposition of the first-choice graph of $G$}
\label{fig:gallai-1}
\end{figure}
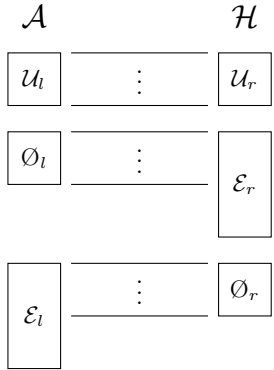

For each agent $a$, define $s(a)$ to be $a$'s most preferred house(s) in $\E_r$. Note that $s(a)$ always exists after the inclusion of last-resort houses $l(a)$. The following is proved in \cite{DBLP:journals/siamcomp/AbrahamIKM07}.

\begin{lemma}
\label{pop-match-charac-2}
A matching $M$ is popular in $G$ if and only if
\begin{enumerate}
\item $M \cap E_1$ is a maximum matching of  $G_1$, and
\item for each applicant $a$, $M(a) \in f(a) \cup s(a)$.
\end{enumerate}
\end{lemma}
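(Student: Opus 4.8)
The plan is to prove both implications, taking condition~(1) in the forward direction directly from Lemma~\ref{pop-match-charac-1} and basing everything else on a single structural observation. Fix a maximum matching $M_1$ of $G_1$ and the induced partition $\E,\O,\U$. The observation is that, for any agent $a$, every house that $a$ strictly prefers to $s(a)$ lies in $\O_r\cup\U_r$: since $s(a)$ is $a$'s most preferred house in $\E_r$, any house ranked strictly above it cannot sit in $\E_r$, and by Lemma~\ref{lemma:gallai-edmonds} the houses of $\H$ are partitioned into $\E_r,\O_r,\U_r$. Moreover $\O_r\cup\U_r$ consists of $f$-houses, and by Lemma~\ref{lemma:gallai-edmonds}(b) every maximum matching of $G_1$ matches each such house to one of its $f$-agents; hence in any $M$ satisfying~(1), each house $h\in\O_r\cup\U_r$ is matched to an agent for whom $h$ is a first choice.

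For necessity, suppose $M$ is popular. Condition~(1) is Lemma~\ref{pop-match-charac-1}, and I would establish~(2) by contradiction. If $M(a)$ is ranked strictly below $s(a)$ for some $a$, then because $s(a)\in\E_r$ is an even vertex there is an even alternating path in $G_1$ ending at $s(a)$ along which $M$ can be rerouted to match $a$ to $s(a)$ while leaving every other affected agent matched to a first choice; this yields a matching more popular than $M$, a contradiction. Hence $M(a)$ is weakly better than $s(a)$. If $M(a)$ were strictly better than $s(a)$, the preliminary observation gives $M(a)\in\O_r\cup\U_r$, so $M(a)$ is a first choice of $a$ and $M(a)\in f(a)$. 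Otherwise $M(a)$ ranks as high as $a$'s best house in $\E_r$: if $M(a)\in\O_r\cup\U_r$ it is again a first choice, and if $M(a)\in\E_r$ it belongs to $s(a)$. Either way~(2) holds.

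For sufficiency, suppose $M$ satisfies~(1) and~(2) but some matching $N$ is more popular; I would derive a contradiction by analysing the symmetric difference $M\oplus N$, a disjoint union of alternating paths and cycles, and arguing component-wise that at least as many agents prefer $M$ as prefer $N$. The engine is a charging map: if $a$ prefers $N$ to $M$, then $M(a)\notin f(a)$ (a first choice admits no strict improvement), so $M(a)\in s(a)$ and $N(a)$ is strictly better than $s(a)$; the observation gives $N(a)\in\O_r\cup\U_r$, so the $M$-partner $b=M(N(a))$ holds $N(a)$ as a first choice. Thus $b$ weakly prefers $M$, and $a\mapsto b$ is injective because $N(a)$ determines $b$ and distinct $N$-preferrers are matched by $N$ to distinct houses. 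Summing these charges should give $\phi(N,M)\le\phi(M,N)$, contradicting $N\succ M$.

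The main obstacle is making this charging tight in the presence of ties. The target $b$ only \emph{weakly} prefers $M$: if $N(b)$ is another rank-one house tied with $M(b)$, then $b$ is indifferent and casts no vote for $M$, so the naive injection need not cancel $a$'s vote. Resolving this requires the finer per-component walk along each path and cycle of $M\oplus N$, using repeatedly that a strict $N$-preferrer is always succeeded along $M$-edges by an agent holding a first choice, and combining this with Lemma~\ref{lemma:gallai-edmonds}(b) and the maximality in~(1) to rule out a closed chain in which every agent strictly improves onto the next agent's first choice. I expect this tie bookkeeping, rather than the necessity direction, to be the crux of the argument.
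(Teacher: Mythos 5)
First, note that the paper itself contains no proof of Lemma~\ref{pop-match-charac-2}: it is imported verbatim from Abraham et~al.~\cite{DBLP:journals/siamcomp/AbrahamIKM07}, so your argument has to stand entirely on its own --- and it does not yet, in either direction. In the necessity direction, your rerouting step is unsound as stated. The even alternating path you invoke starts at a house $h_0$ that is exposed only with respect to $M\cap E_1$; nothing prevents $h_0$ from being matched in $M$ itself, via a non-first-choice edge, to some agent $b$ (popular matchings routinely assign $s$-houses that are exposed in the first-choice graph). Rerouting then evicts $b$, who casts a vote for $M$, and the comparison ends in a $1$--$1$ tie rather than a matching more popular than $M$, so no contradiction follows. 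The repair needs one more move: $b$ is exposed in $M\cap E_1$, so by maximality every house $w\in f(b)$ is matched in $M\cap E_1$ (otherwise $(b,w)$ would be an augmenting edge), to an agent $e$ with $w\in f(e)$; promote $b$ to $w$ and evict $e$. Now $a$ and $b$ both improve while only $e$ suffers, giving $2>1$ and the desired contradiction. Without this extra step the necessity proof fails precisely on the configurations it must rule out.

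The sufficiency direction is in worse shape: you correctly identify that the naive charging map breaks under ties (the charged agent $b=M(N(a))$ may be re-matched by $N$ to a tied first choice and hence be indifferent), but you then defer the resolution to an unexecuted ``per-component walk.'' That walk is not bookkeeping; it is the entire content of this direction. Concretely, from each agent $a$ preferring $N$ one must follow the chain $a,\,N(a),\,M(N(a)),\,N(M(N(a))),\dots$ through indifferent agents and prove: (i) while agents remain indifferent, every house on the chain lies in $\O_r\cup\U_r$ (using Lemma~\ref{lemma:gallai-edmonds} together with the fact that, with respect to a maximum matching of $G_1$, even agents have all their $G_1$-neighbours in $\O_r$), hence each such house is matched in $M\cap E_1$ and the chain never dies at a house; (ii) the chain cannot close into a cycle, since that would force $M(a)\in\O_r\cup\U_r$ whereas condition (2) puts $M(a)\in s(a)\subseteq\E_r$; (iii) the chain therefore terminates at an agent matched in $M$ to a first choice who is \emph{not} indifferent, i.e.\ strictly prefers $M$; and (iv) chains started from distinct $N$-preferrers terminate at distinct such agents, because the terminal agent is entered along its unique $M$-edge and the backward trace is deterministic. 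Only (i)--(iv) together yield $\phi(N,M)\le\phi(M,N)$. As submitted, your proposal states this plan and stops, so the ``if'' direction is missing, not merely terse.
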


\noindent
The following hardness result is from \cite{DBLP:conf/stacs/Nasre13}.

\begin{theorem}
Counting the number of popular matchings in HAT is \#P-hard.
\end{theorem}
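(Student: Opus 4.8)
The plan is to prove \#P-hardness by a reduction from the problem of counting perfect matchings in a bipartite graph, which is \#P-complete (Valiant's theorem on the permanent). The reduction exploits Lemma~\ref{pop-match-charac-2}: in the strict (HA) case each agent has a single first choice, so the first-choice graph $G_1$ is almost trivial, but in the presence of ties an agent may tie \emph{all} of its neighbours at rank one. This lets $G_1$ be an essentially arbitrary bipartite graph, and it is precisely the freedom in choosing a maximum matching of $G_1$ that will encode the hard counting problem.

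Concretely, given a bipartite graph $B = (X \cup Y, F)$ with $|X| = |Y| = n$ whose perfect matchings we wish to count, I would build the HAT instance $G$ as follows. Take the agents to be $X$ and the houses to be $Y$ together with the usual last-resort houses $l(a)$. For each agent $a \in X$, place \emph{all} of its $B$-neighbours in a single tied block at rank one, followed by $l(a)$ at the bottom. Then the first-choice edge set $E_1$ is exactly $F$, so $G_1$ equals $B$ together with the isolated last-resort vertices.

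For correctness, first test in polynomial time whether $B$ admits a perfect matching; if not, the answer is $0$ and we are done. Otherwise every maximum matching of $G_1$ is a perfect matching of $B$ of size $n$, and there are no $B$-vertices left unmatched to seed an alternating path, so the Gallai--Edmonds decomposition of Lemma~\ref{lemma:gallai-edmonds} places all of $X \cup Y$ in $\U$; the only even houses are the isolated last-resort houses. Hence $\E_r$ contains only last-resort houses and $s(a) = \{l(a)\}$ for every $a$. Now apply Lemma~\ref{pop-match-charac-2}: any popular matching $M$ has $M \cap E_1$ a maximum matching of $G_1$, which has size $n$, so every agent is matched along a first-choice (i.e.\ a $B$-)edge and no agent is sent to its last resort. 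Thus $M$ is exactly a perfect matching of $B$, and conversely every perfect matching of $B$ satisfies both conditions of Lemma~\ref{pop-match-charac-2} and is therefore popular. This gives a bijection between popular matchings of $G$ and perfect matchings of $B$, so counting the former is at least as hard as counting the latter.

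The step that needs the most care is ruling out \emph{spurious} popular matchings coming from the second option $s(a)$ in Lemma~\ref{pop-match-charac-2}: a priori an agent could be assigned a house in $\E_r$ instead of a first choice. The last-resort-house bookkeeping together with the Gallai--Edmonds structure is exactly what forces $\E_r$ to carry no useful houses once $B$ has a perfect matching, collapsing the $f(a) \cup s(a)$ freedom down to the first-choice edges alone. Because the resulting map is a bijection, the reduction is parsimonious, which simultaneously yields the claimed hardness and confirms that the source of hardness is solely the ability of ties to turn $G_1$ into an arbitrary bipartite graph.
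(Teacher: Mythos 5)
Your proof is correct, but note that the paper itself contains no proof of this theorem: it is quoted as a known result from Nasre \cite{DBLP:conf/stacs/Nasre13}, so the comparison is with that external work rather than with any argument in the text. Your reduction is a clean, self-contained alternative that uses only machinery the paper already sets up: ties let you make the first-choice graph $G_1$ equal to an arbitrary bipartite graph $B$; the polynomial-time pre-test for a perfect matching of $B$ guarantees that the Gallai--Edmonds decomposition (Lemma~\ref{lemma:gallai-edmonds}) puts every vertex of $B$ into $\U$ and leaves only the isolated last-resort houses in $\E_r$, so $s(a)=\{l(a)\}$ for every agent; and then Lemma~\ref{pop-match-charac-2} forces every popular matching to be exactly a perfect matching of $B$ (condition (1) already saturates all $n$ agents on first-choice edges, so the $s(a)$ freedom never materializes). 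The pre-test is genuinely needed and you were right to include it: if $B$ has no perfect matching the counts disagree --- for instance, with $X=\{x_1,x_2\}$, $Y=\{y_1,y_2\}$ and edges $x_1y_1$, $x_2y_1$, the HAT instance has two popular matchings (one of $x_1,x_2$ gets $y_1$, the other its last resort) while $B$ has no perfect matching --- so what you actually have is a polynomial-time Turing reduction that is parsimonious only on instances passing the test; this suffices for \#P-hardness, but the closing claim that ``the reduction is parsimonious'' should carry that restriction. A pleasant feature of your argument is that it is the exact converse of the paper's Lemma~\ref{lemma:per-match-red}: the paper reduces counting popular matchings in HAT to counting perfect matchings in a bipartite graph (to obtain the FPRAS via Jerrum--Sinclair--Vigoda), while you reduce in the opposite direction to obtain hardness, and the two constructions together show that the two counting problems are essentially equivalent.
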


We now give an FPRAS for counting the number of popular matchings in the case of ties. As before, let $G = (\A \cup \H, E)$ be our HAT instance. We assume that that $G$ admits at least one popular matching (this can be tested using the characterization). We reduce our problem to the problem of counting perfect matchings in a bipartite graph. We start with the first-choice graph $G_1$ of $G$, and perform a Gallai-Edmonds decomposition of $G_1$ using any maximum matching of $G_1$. In order to get a perfect matching instance, we extend the structure obtained from Gallai-Edmonds decomposition described in Figure $\ref{fig:gallai-1}$. Let $\F$ be the set of $f$-houses and $\S$ be the set of $s$-houses. We make use of the following observations in the decomposition.
\begin{itemize}
\item[---] Every agent in $\U_l$ and $\O_l$ gets one of their first-choice houses in every popular matching.
\item[---] $\E_r$ can be further partitioned into the following sets:
	\begin{itemize}
	\item[--] $\E^f_r := \{h \in \F \cap  \overline{\S}, h \in \E_r\}$,
	\item[--] $\E^s_r := \{h \in \overline{\F} \cap \S, h \in \E_r\}$,
	\item[--] $\E^{f/s}_r := \{h \in \F \cap \S, h \in \E_r\}$, and
	\item[--] $\E^\star_r := \{h \notin \F \cup \S, h \in \E_r\}$.
	\end{itemize}
\item[---] $O_l$ can only match with houses in $\E^f_r \cup \E^{f/s}_r$ in every popular matching.
\end{itemize}
These observations are described in Figure $\ref{fig:gallai-2}$(a).

\ \\
Next, we observe that every agent in $\E_l$ that is already not matched to a house in $\O_r$, must match to a house in $\E^s_r \cup \E^{f/s}_r$. We facilitate this by adding all edges $(a, s(a))$ for each agent in $\E_l$. Finally, we add a set of dummy agent vertices $\D$ on the left side to balance the bipartition. The size of $\D$ is $|\A| - (|\H| - |\E^\star_r|)$. This difference is non-negative as long as the preference-lists of agents are complete. We make the bipartition $(\D, \E^f_r \cup \E^{f/s}_r \cup \E^s_r)$ a complete bipartite graph by adding the appropriate edges. This allows us to move from one popular matching to another by switching between first and second-choices and, among second choices of agents. Finally, we remove set $\E^\star_r$ from the right side. The new structure is described in Figure $\ref{fig:gallai-2}$(b). Denote the new graph by $G'$.

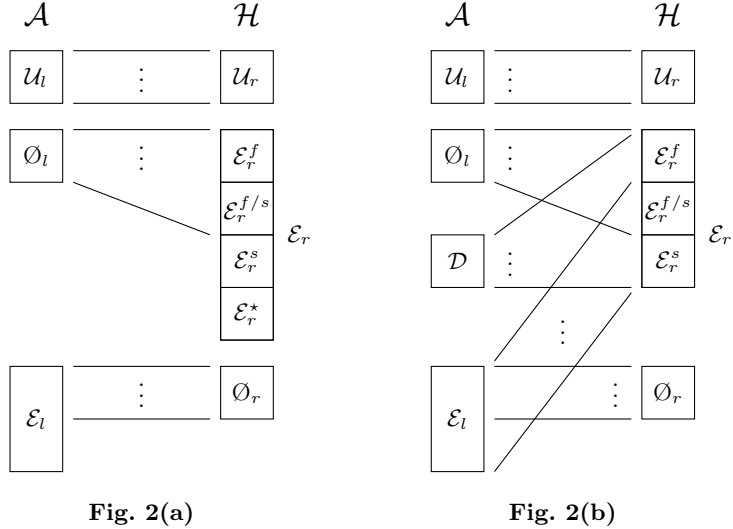
\begin{figure}
\centering

\begin{tikzpicture}[x=0.7cm,y=0.7cm] 
\tikzstyle{every node}=[font=\normalfont]
\draw [help lines, gray!0] (0,-2) grid (14,8);


\node at (0.5,7.7) {\large{$\A$}};
\node at (4.5,7.7) {\large{$\H$}};

\draw (0,6) rectangle (1,7);
\node at (0.5,6.5) {$\U_l$};
\draw (4,6) rectangle (5,7);
\node at (4.5,6.5) {$\U_r$};
\draw (1.2, 6) -- (3.8, 6);
\draw (1.2, 7) -- (3.8, 7);
\node at (2.5,6.6) {$\vdots$};

\draw (0,4.5) rectangle (1,5.5);
\node at (0.5,5) {$\O_l$};
\draw (4,1.5) rectangle (5,5.5);
\node at (5.5,3.5) {$\E_r$};
\draw (4,4.5) rectangle (5,5.5);
\node at (4.5,5) {$\E^f_r$};
\draw (4,3.5) rectangle (5,4.5);
\node at (4.5,4) {$\E^{f/s}_r$};
\draw (4,2.5) rectangle (5,3.5);
\node at (4.5,3) {$\E^s_r$};
\draw (4,1.5) rectangle (5,2.5);
\node at (4.5,2) {$\E^\star_r$};

\draw (1.2, 5.5) -- (3.8, 5.5);
\draw (1.2, 4.5) -- (3.8, 3.5);
\node at (2.5,5.1) {$\vdots$};

\draw (0,-1) rectangle (1,1);
\node at (0.5,0) {$\E_l$};
\draw (4,0) rectangle (5,1);
\node at (4.5,0.5) {$\O_r$};
\draw (1.2,1) -- (3.8,1);
\draw (1.2,0) -- (3.8,0);
\node at (2.5,0.6) {$\vdots$};

\node at (2.5,-1.8) {$\textbf{Fig. 2(a)}$};


\node at (8.5,7.7) {\large{$\A$}};
\node at (12.5,7.7) {\large{$\H$}};

\draw (8,6) rectangle (9,7);
\node at (8.5,6.5) {$\U_l$};
\draw (12,6) rectangle (13,7);
\node at (12.5,6.5) {$\U_r$};
\draw (9.2, 6) -- (11.8, 6);
\draw (9.2, 7) -- (11.8, 7);
\node at (9.5,6.6) {$\vdots$};

\draw (8,4.5) rectangle (9,5.5);
\node at (8.5,5) {$\O_l$};
\draw (12,2.5) rectangle (13,5.5);
\node at (13.5,3.5) {$\E_r$};
\draw (12,4.5) rectangle (13,5.5);
\node at (12.5,5) {$\E^f_r$};
\draw (12,3.5) rectangle (13,4.5);
\node at (12.5,4) {$\E^{f/s}_r$};
\draw (12,2.5) rectangle (13,3.5);
\node at (12.5,3) {$\E^s_r$};

\draw (9.2, 5.5) -- (11.8, 5.5);
\draw (9.2, 4.5) -- (11.8, 3.5);
\node at (9.5,5.1) {$\vdots$};

\draw (8,-1) rectangle (9,1);
\node at (8.5,0) {$\E_l$};
\draw (12,0) rectangle (13,1);
\node at (12.5,0.5) {$\O_r$};
\draw (9.2,1) -- (11.8,1);
\draw (9.2,0) -- (11.8,0);
\node at (11.5,0.6) {$\vdots$};

\draw (9.2,1.1) -- (11.8,4.5);
\draw (9.2,-1) -- (11.8,2.4);
\node at (10.5,1.8) {$\vdots$};

\draw (8,2.5) rectangle (9,3.5);
\node at (8.5,3) {$\D$};
\draw (9.2,2.5) -- (11.8,2.5);
\draw (9.2,3.5) -- (11.8,5.4);
\node at (9.5,3.1) {$\vdots$};

\node at (10.5,-1.8) {$\textbf{Fig. 2(b)}$};
\end{tikzpicture}

\caption{Reduction to a perfect-matching instance by extending the Gallai-Edmonds decomposition of $G_1$.}
\label{fig:gallai-2}
\end{figure}

\begin{lemma}
\label{lemma:per-match-red}
The number of popular matchings in $G$ is $|D|!$ times the number of perfect matchings in $G'$.
\end{lemma}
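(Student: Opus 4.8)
The plan is to build an explicit correspondence between the popular matchings of $G$ and the perfect matchings of $G'$, and then to track how the mutually interchangeable dummy vertices in $\D$ contribute the factor $|\D|!$, arriving at the identity exactly as stated. By Lemma~\ref{pop-match-charac-2}, a matching $M$ is popular in $G$ precisely when it is applicant-complete, satisfies $M(a)\in f(a)\cup s(a)$ for every agent $a$, and has $M\cap E_1$ maximum in $G_1$. Reading off the Gallai--Edmonds structure of Figure~\ref{fig:gallai-2}, I would first pin down the rigid part shared by all popular matchings: every agent of $\U_l$ takes a first choice in $\U_r$, every agent of $\O_l$ takes a first choice in $\E^f_r\cup\E^{f/s}_r$, and every house of $\U_r\cup\O_r$ is held by an agent of $f(h)$. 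All remaining freedom lies with the agents of $\E_l$ (each either retaining a first choice in $\O_r$ or taking a second choice $s(a)\in\E^s_r\cup\E^{f/s}_r$) and with the tie-induced choices among equally ranked houses; these are exactly the edges $(a,s(a))$ added to $\E_l$, and they are exactly the edges $G'$ makes available on the real agents.

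Next I would set up the map in the direction demanded by the statement. Restricting a perfect matching $N$ of $G'$ to the real agents $\A$ gives an assignment all of whose edges lie in $f(a)\cup s(a)$ (the only $G'$-edges incident to real agents are first-choice edges of $G_1$ and the added $(a,s(a))$ edges) and which is applicant-complete, since $N$ saturates $\A\cup\D$; checking via the Gallai--Edmonds count that the restriction of $N$ to $E_1$ remains maximum in $G_1$ then shows, by Lemma~\ref{pop-match-charac-2}, that this restriction is popular in $G$. Conversely every popular matching of $G$ is obtained this way. The bridge between the two counts is the collection of houses in $\E^f_r\cup\E^{f/s}_r\cup\E^s_r$ left uncovered by the real agents: the sizes were arranged ($\E^\star_r$ deleted on the right, and $|\D|=|\A|-(|\H|-|\E^\star_r|)$ dummies added on the left) precisely so that this free set has size $|\D|$, and since $\D$ is joined to $\E^f_r\cup\E^{f/s}_r\cup\E^s_r$ by a complete bipartite graph, the dummies can be laid onto the free houses in all $|\D|!$ ways.

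Assembling these pieces, the counting step produces the claimed identity: the number of popular matchings of $G$ equals $|\D|!$ times the number of perfect matchings of $G'$. The delicate point, which I expect to be the main obstacle, is the bookkeeping that fixes both the size and the side on which the factor $|\D|!$ appears: one must verify that the free-house set has size exactly $|\D|$ for every popular matching, so that the dummy contribution is a uniform $|\D|!$ and every perfect matching saturates $\D$, and that distinct second-choice and tie-breaking decisions among the $\E_l$ agents yield genuinely distinct matchings on both sides, so that folding in the $|\D|!$ dummy permutations neither over- nor under-counts. The structural half --- that the restriction to $\A$ always lands on a popular matching, and that every popular matching is realized --- is comparatively routine once Lemma~\ref{pop-match-charac-2} and the Gallai--Edmonds decomposition (Lemma~\ref{lemma:gallai-edmonds}) are in hand; carrying the factor $|\D|!$ cleanly through the count, normalized as in the statement, is the crux.
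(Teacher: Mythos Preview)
Your approach is the same as the paper's: restrict a perfect matching of $G'$ to the real agents to obtain a popular matching of $G$ (via Lemma~\ref{pop-match-charac-2} and the Gallai--Edmonds count), observe that every popular matching arises this way, and account for the dummy vertices as a uniform multiplicative factor. Your write-up is in fact more careful than the paper's own proof, which never explicitly argues that the fibre over each popular matching has size exactly $|\D|!$.

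There is, however, a genuine slip in the final bookkeeping---precisely the point you flagged as delicate. Your argument shows that the restriction map from perfect matchings of $G'$ to popular matchings of $G$ is surjective with fibres of size $|\D|!$ (the $|\D|!$ bijections from $\D$ onto the $|\D|$ houses left free by the real agents). That yields
\[
\#\{\text{perfect matchings of }G'\} \;=\; |\D|!\cdot \#\{\text{popular matchings of }G\},
\]
i.e.\ the factor $|\D|!$ sits on the \emph{other} side from what you (and the lemma as stated) conclude. The statement of the lemma in the paper appears to carry the same inversion; its proof is too terse to notice the discrepancy. For the downstream application (the FPRAS via Lemma~\ref{lemma:jerrum-sinclair}) this is harmless---one divides by $|\D|!$ instead of multiplying---but as a standalone identity your concluding sentence does not match the argument you gave.
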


\begin{proof}
Consider a perfect matching $M$ of $G'$. Let the matching $M'$ be obtained by removing from $M$ all the edges coming out of the set $\D$. Observe that $M' \cap E_1$ is a maximum matching of $G_1$. This is because the sets $\U_l$, $O_l$ and $\O_r$ are always matched in $M'$ (or else $M$ would not be a perfect matching of $G'$) and that the size of a maximum matching in $G_1$ is $(|\U_l| + |\O_l| + |\O_r|)$ by Lemma \ref{lemma:gallai-edmonds}. Also, each agent in $\A$ is matched to either a house in $\F$ or in $\S$ by the construction of graph $G'$. Using Lemma \ref{pop-match-charac-2}, we conclude that $M$ is a popular matching of $G$. Finally, observe that every popular matching in $M$ in $G$ can be augmented to a perfect matching of $G'$ by adding exactly $|\D|$ edges. This follows again from Lemma \ref{lemma:gallai-edmonds} and Lemma \ref{pop-match-charac-2}. \qed
\end{proof}

\noindent
We now make use of the following result of Jerrum et~al.\ from \cite{DBLP:conf/stoc/JerrumSV01}.

\begin{lemma}
\label{lemma:jerrum-sinclair}
There exists an FPRAS for the problem of counting number of perfect matchings a bipartite graph.
\end{lemma}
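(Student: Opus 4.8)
The plan is to recognize this as the result of Jerrum, Sinclair, and Vigoda on approximating the permanent, and to reproduce its Markov Chain Monte Carlo strategy. First, counting perfect matchings in a bipartite graph $H = (U \cup V, E)$ with $|U| = |V| = n$ is exactly the permanent of the $n \times n$ $0/1$ biadjacency matrix $A$ (with $A_{ij} = 1$ iff $(u_i,v_j) \in E$), so it suffices to give an FPRAS for the $0/1$ permanent. By the standard self-reducibility of matchings, approximate counting reduces to approximate sampling: if I can sample almost uniformly from the set of perfect matchings in polynomial time, then I can estimate the count by a telescoping product of edge-conditioned ratios, each ratio estimated from samples. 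Hence the core task is to build a rapidly mixing Markov chain whose stationary distribution concentrates on perfect matchings.

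The difficulty is that a naive chain on perfect matchings alone need not even be connected under local moves, so I would follow JSV and work on the enlarged state space consisting of perfect matchings together with \emph{near-perfect} matchings that leave exactly one vertex unmatched on each side (a pair of ``holes''). Local moves (adding, deleting, or sliding a single edge) connect this space and keep adjacent states close. The real obstacle is that near-perfect matchings can outnumber perfect ones exponentially, which would make perfect matchings exponentially rare in any unweighted stationary distribution. To counter this I would attach to each hole-pair $(u,v)$ a weight $\lambda(u,v)$ and design a Metropolis chain whose stationary distribution assigns essentially equal total mass to each hole-pair class; the ideal weight is the ratio of the number of perfect matchings to the number of near-perfect matchings with holes at $(u,v)$.

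Two tasks then remain, and the second is the main obstacle. First, I must prove the chain mixes in time polynomial in $n$ and in the weights, which I would do by a canonical-paths (multicommodity-flow) argument: route a unit of flow between each ordered pair of states along a canonical path obtained by superimposing the two matchings into alternating paths and cycles and unwinding them in a prescribed order, then bound the congestion across every transition by injectively encoding the routed pairs. Second, and this is where the real work lies, the ideal weights $\lambda$ are themselves unknown, so I would estimate them by an annealing schedule: start from an instance whose permanent is known (for example the all-ones matrix, where the count is $n!$ and all weights are computable) and interpolate to the target matrix through a sequence of instances that change the entries slowly, using samples at each stage to estimate the weights needed for the next. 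The main technical burden is to show simultaneously that the congestion stays polynomially bounded while the stationary weights remain balanced across all $\Theta(n^2)$ hole-pair classes, and that the annealing schedule needs only polynomially many stages, each incurring only a small multiplicative error. Combining the resulting polynomial-time near-uniform sampler with self-reducibility then yields the claimed FPRAS.
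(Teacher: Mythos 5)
The paper offers no proof of this lemma at all: it is stated purely as a quoted result of Jerrum, Sinclair and Vigoda, with a citation to their STOC 2001 paper. Your proposal correctly identifies that result and faithfully reconstructs the cited paper's own proof strategy (reduction to the $0/1$ permanent, the Markov chain on perfect and near-perfect matchings with hole-pair weights, the canonical-paths congestion bound, and the annealing schedule for estimating the ideal weights), so it is fully consistent with how the lemma is used here.
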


\noindent
From Lemma $\ref{lemma:per-match-red}$ and Lemma $\ref{lemma:jerrum-sinclair}$, we have the following.

\begin{theorem}
There exists an FPRAS for counting the number of popular matchings in the House Allocation problem with Ties.
\end{theorem}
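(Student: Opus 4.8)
The plan is to combine the two preceding lemmas directly: Lemma~\ref{lemma:per-match-red} reduces counting popular matchings in $G$ to counting perfect matchings in the bipartite graph $G'$, up to the exactly-computable multiplicative factor $|\D|!$, while Lemma~\ref{lemma:jerrum-sinclair} supplies an FPRAS for counting perfect matchings. The composition of the two should yield an FPRAS for the original problem, so the whole argument reduces to checking that the reduction is efficient and that composing it with the Jerrum--Sinclair--Vigoda estimator preserves the approximation guarantee.

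First I would make the construction of $G'$ algorithmic and confirm that it runs in polynomial time. This involves computing a maximum matching $M_1$ of the first-choice graph $G_1$ by standard bipartite matching; performing the Gallai--Edmonds decomposition to obtain the partition into $\E$, $\O$, $\U$ via alternating-path searches from the unmatched vertices; refining $\E_r$ into $\E^f_r$, $\E^s_r$, $\E^{f/s}_r$, $\E^\star_r$ by testing membership in $\F$ and $\S$; and finally adding the edges $(a, s(a))$ for every $a \in \E_l$, introducing the dummy set $\D$ of size $|\A| - (|\H| - |\E^\star_r|)$, and completing the bipartite graph between $\D$ and $\E^f_r \cup \E^{f/s}_r \cup \E^s_r$. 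Each step either scans $O(n+m)$ vertices and edges or adds at most $O(n^2)$ edges, so $G'$ has size polynomial in $n = |\A| + |\H|$ and is produced in polynomial time.

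Next I would invoke Lemma~\ref{lemma:jerrum-sinclair} on $G'$ with accuracy parameter $\epsilon$. In time polynomial in $|G'|$ and $1/\epsilon$ this returns a random estimate $\widehat{p}$ satisfying $(1-\epsilon)\,\mathrm{pm}(G') \le \widehat{p} \le (1+\epsilon)\,\mathrm{pm}(G')$ with probability at least $3/4$, where $\mathrm{pm}(G')$ denotes the number of perfect matchings of $G'$. I would then output $|\D|! \cdot \widehat{p}$ as the estimate for the number of popular matchings of $G$.

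The only point needing care --- and it is genuinely routine rather than a real obstacle --- is verifying that scaling by the exact integer $|\D|!$ preserves the FPRAS guarantee. Since $|\D|!$ is a fixed nonnegative integer computed deterministically and exactly (its bit-length is $O(|\D| \log |\D|)$, so this costs only polynomial time), multiplying the two-sided bound through by $|\D|!$ gives $(1-\epsilon)\,|\D|!\,\mathrm{pm}(G') \le |\D|!\,\widehat{p} \le (1+\epsilon)\,|\D|!\,\mathrm{pm}(G')$. By Lemma~\ref{lemma:per-match-red} the outer quantities are exactly the relative-$\epsilon$ window around the true count of popular matchings in $G$, the success probability $3/4$ is unchanged, and the total running time remains polynomial in $n$ and $1/\epsilon$. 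This establishes the FPRAS and completes the proof.
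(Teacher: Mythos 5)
Your proposal is correct and follows exactly the paper's route: the paper proves this theorem simply by combining Lemma~\ref{lemma:per-match-red} (the $|\D|!$-factor reduction to counting perfect matchings in $G'$) with Lemma~\ref{lemma:jerrum-sinclair} (the Jerrum--Sinclair--Vigoda FPRAS). Your additional verification that the reduction runs in polynomial time and that exact scaling by $|\D|!$ preserves the relative-error guarantee is the routine detail the paper leaves implicit.
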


%


\section{Counting in Capacitated House Allocation problem}
\label{sec:CHA}

In this section, we consider the structure of popular matchings in Capacitated House Allocation problem (CHA). A CHA instance $I$ consists of agents $\A$ and houses $\H$. Let $|\A| = n$ and $|\H| = m$. Let $c:\H \to \mathbb{Z}_{>0}$ be the capacity function for houses. Each agent orders a subset of the houses in a strict order creating its $\textit{preference list}$.  The preference list of $a_i \in \A$ defines a set of edges $E_i$ from $a_i$ to houses in $\H$. Define $E = \cup_{i \in [n]} E_i$. The problem instance $I$ can then be represented by a bipartite graph $G = (\A \cup \H, E)$.

For the instance $I$, a $\textit{matching}$ $M$ is a subset of $E$ such that each agent appears in at most one edge in $M$ and each house $h$ appears in at most $c(h)$ edges in $M$. The definitions of $\textit{more popular than}$ relationship between two matchings and $\textit{popular matching}$ is same as described earlier in Section $\ref{sec:intro}$.

We now outline a characterization of popular matchings in CHA from \cite{DBLP:journals/jda/SngM10}. As before, denote by $f(a)$ the first choice of an agent $a \in \A$. A house which is the first choice of at least one agent is called an $f$-house. For each house $h \in \H$, define $f(h) = \{a \in \A, f(a) = h\}$. For each agent $a \in \A$, we add a unique last-resort house $l(a)$ with least priority and capacity 1.

\begin{lemma}
\label{lemma:f-hosue-max-match}
If $M$ is a popular matching then for each $f$-house $h$, $|M(h) \cap f(h)| = \textnormal{min} \{c(h), |f(h)|\}$.
\end{lemma}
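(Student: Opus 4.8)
The inequality $|M(h)\cap f(h)|\le \min\{c(h),|f(h)|\}$ is immediate, since $M(h)\cap f(h)$ lies inside both $f(h)$ and $M(h)$ and $|M(h)|\le c(h)$. So the content of the statement is the reverse inequality, and establishing it for every $f$-house is the same as saying that the first-choice part of $M$ is as large as the capacities allow on each house. The plan is to make this precise through the cloning reduction of the Remark and then to import the corresponding HAT fact, namely Lemma \ref{pop-match-charac-1}, rather than to argue inside CHA directly. I want to stress that the reduction's known defect — that one popular matching of $G$ unfolds into many in the cloned instance — is irrelevant here, because this lemma is a structural statement about a single fixed $M$, not a counting statement.

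Concretely, I would pass from the CHA instance $G$ to the HAT instance $G'$ that replaces each house $h$ by $c(h)$ unit-capacity clones $h^1,\dots,h^{c(h)}$, tied together wherever $h$ occurred. Given the popular matching $M$ of $G$, I assign the $|M(h)|\le c(h)$ agents of $M(h)$ to distinct clones of $h$; this yields a matching of $G'$, still written $M$. By the Remark $M$ is popular in $G'$, so Lemma \ref{pop-match-charac-1} applies and $M\cap E_1'$ is a maximum matching of the first-choice graph $G_1'$ of $G'$. The crucial observation is that $G_1'$ has a transparent shape: an agent $a$ ranks first exactly the clones of its unique first choice $f(a)$, and a clone $h^j$ is ranked first exactly by the agents of $f(h)$; hence $G_1'$ is a disjoint union, over the $f$-houses $h$, of complete bipartite graphs between $f(h)$ and $\{h^1,\dots,h^{c(h)}\}$, with every non-$f$-house isolated. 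A matching is maximum in such a graph precisely when it is maximum on each component, and the component of $h$ has maximum matching size $\min\{c(h),|f(h)|\}$. Translating back to $G$, the number of agents of $f(h)$ that $M$ sends to clones of $h$ — which is exactly $|M(h)\cap f(h)|$ — therefore equals $\min\{c(h),|f(h)|\}$, as claimed.

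The step I expect to require the most care is the transfer of popularity across the reduction: one must verify that a matching beating $M$ in $G'$ would contract to a matching beating $M$ in $G$, which rests on each agent being indifferent among the clones of a house, and this is exactly what legitimizes the Remark. If a self-contained proof within CHA were demanded instead, I would argue by contradiction: from $|M(h)\cap f(h)|<\min\{c(h),|f(h)|\}$ I extract a first-choice agent $a$ of $h$ with $M(a)\neq h$ together with a spare first-choice slot at $h$, and then promote along an alternating path that repeatedly evicts a non-first-choice occupant of the current house and reroutes it to its own first choice, every such move being a strict improvement for the agent moved. The genuine difficulty there is termination: the path must reach a house with free capacity (or close up into an all-gain cycle) rather than stall at a house that is full and occupied solely by its own first-choice agents, and ruling out such a stall — for instance by a deficiency argument tracking where the surplus first-choice demand created at $h$ can be absorbed — is the crux. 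It is precisely this complication that the cloning reduction avoids, which is why I would prefer the reduction route.
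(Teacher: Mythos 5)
Your proof is correct, but note that the paper itself contains no proof of this statement to compare against: Lemma~\ref{lemma:f-hosue-max-match} is quoted as part of the characterization of popular matchings in CHA imported from Sng and Manlove \cite{DBLP:journals/jda/SngM10}, where it is established directly inside the capacitated model (showing that a violation of the degree condition lets one build a strictly more popular matching). What you do instead is derive the CHA fact from machinery the paper already sets up for HAT: the cloning reduction of the Remark, Lemma~\ref{pop-match-charac-1}, and the observation that the first-choice graph of the cloned instance decomposes into disjoint complete bipartite components between $f(h)$ and the clones $\{h^1,\dots,h^{c(h)}\}$, on which maximality of $M\cap E_1'$ forces exactly $\min\{c(h),|f(h)|\}$ rank-one edges at each $f$-house. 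This is a genuinely different and valid route. What it buys: it is short, reuses only results already stated in the paper, and confines the genuinely hard combinatorial step --- producing a more popular matching from a non-maximum first-choice matching, which is exactly the termination difficulty you flag in your fallback sketch --- to the cited HAT lemma of Abraham et~al.\ \cite{DBLP:journals/siamcomp/AbrahamIKM07}. What it costs: the argument stands entirely on the Remark's transfer claim that every popular matching of $G$ remains popular in the cloned instance $G'$, which the paper asserts without proof; your contraction argument (agents are indifferent among clones, so a matching beating the image of $M$ in $G'$ contracts to a valid CHA matching beating $M$ in $G$) is precisely the right justification and should be written out in full, since it is the only nontrivial step you cannot simply cite. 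The direct route of \cite{DBLP:journals/jda/SngM10} avoids this dependency at the price of redoing the alternating-path analysis in the capacitated setting.
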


\noindent
For each agent $a \in \A$, define $s(a)$ to be the highest ranked house $h$ on $a$'s preference list such that one of the following is true:
\begin{itemize}
\item $h$ is not an $f$-house, or,
\item $h$ is an $f$-house but $h \neq f(a)$ and $|f(h)| < c(h)$.
\end{itemize}

\noindent
Notice that $s(a)$ always exists after the inclusion of last-resort houses $l(a)$. The following lemma gives the characterization of popular matchings in $G$. 

\begin{lemma}
\label{lemma:CHA-charac}
A matching $M$ is popular if and only if
\begin{enumerate}
\item for every $f$-house $h \in \H$,
	\begin{itemize}
	\item[--] if $|f(h)| \le c(h)$, then every agent in $f(h)$ is matched to the house $h$, 
	\item[--] else, house $h$ is matched to exactly $c(h)$ agents, all belonging to $f(h)$,
	\end{itemize}
\item $M$ is an agent complete matching such that for each agent $a \in \A$, $M(a) \in \{f(a), s(a)\}$. 
\end{enumerate}
\end{lemma}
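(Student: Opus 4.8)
The plan is to prove the two directions separately: sufficiency (conditions (1)--(2) $\Rightarrow$ popular) by a direct counting argument on votes, and necessity (popular $\Rightarrow$ (1)--(2)) by exchange arguments that build a strictly more popular matching whenever a condition is violated. For sufficiency, suppose $M$ satisfies (1) and (2) and let $N$ be an arbitrary matching; I would bound $\phi(N,M) \le \phi(M,N)$. Since (2) makes $M$ agent-complete, any agent $a$ that prefers $N$ has $N(a) \succ_a M(a)$, and because nothing beats a first choice this forces $M(a) = s(a)$ and $N(a) \succ_a s(a)$. By the definition of $s(a)$ (and, in the subcase $N(a) = f(a)$, by condition (1) forcing $f(a)$ to be over-demanded precisely when $M(a) = s(a) \ne f(a)$), the house $h' := N(a)$ is then an $f$-house that is full in $M$, with $M(h') \subseteq f(h')$, $|M(h')| = c(h')$, and $a \notin M(h')$. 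Hence every agent preferring $N$ sits in $N(h') \setminus M(h')$ for some such full $f$-house $h'$.

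The counting then proceeds house by house. For each such full $f$-house $h'$, the capacity bound $|N(h')| \le c(h') = |M(h')|$ yields $|N(h') \setminus M(h')| \le |M(h') \setminus N(h')|$, so the number of $N$-preferring agents routed into $h'$ is at most the number of first-choicers $b \in M(h') \setminus N(h')$ displaced from $h'$. Each such $b$ has $M(b) = h' = f(b)$, so $b$ prefers $M$ (or is unmatched in $N$) and thus contributes to $\phi(M,N)$; and since each agent occupies a single house in $M$, the displaced sets are disjoint across houses. Summing over all these houses gives $\phi(N,M) \le \sum_{h'} |M(h') \setminus N(h')| \le \phi(M,N)$, so no matching is more popular than $M$.

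For necessity, condition (1) is exactly Lemma~\ref{lemma:f-hosue-max-match}, so I would invoke it directly. Agent-completeness follows because an unmatched agent $a$ could be assigned its free, unique last-resort house $l(a)$, strictly improving $a$'s vote while affecting no one else and contradicting popularity. For the assignment constraint $M(a) \in \{f(a), s(a)\}$, I would first rule out $M(a)$ ranked strictly between $f(a)$ and $s(a)$: any such house is $\ne f(a)$ and, by the definition of $s(a)$, is a full $f$-house, hence by condition (1) matched only to its own first-choicers, which excludes $a$. It then remains to show that $M(a)$ is never ranked strictly below $s(a)$.

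The hard part will be this final step in the capacitated setting. If $a$ is matched below $h^* := s(a)$ and $h^*$ has a free slot, a single swap moving $a$ into $h^*$ already beats $M$, a contradiction. The obstacle is when $h^*$ is full: simply evicting one occupant and inserting $a$ is only vote-neutral, so instead I would search for a \emph{promotion path} --- an alternating sequence rooted at $a$ that terminates at a house carrying the kind of slack admitted by the definition of $s(\cdot)$ --- and shift $M$ along it to obtain a strictly more popular matching. Making this search well-defined and guaranteeing it reaches genuine slack rather than cycling is the crux; this is precisely the combinatorial structure later formalized as the switching graph, and I expect to rely on a Hall/deficiency-style argument over the set of houses reachable from $a$ to produce the required augmentation.
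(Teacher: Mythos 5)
The paper itself gives no proof of this lemma---it is imported verbatim from Sng and Manlove's paper---so your attempt has to be judged on its own merits rather than against a proof in the text.

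Your sufficiency direction is correct and complete. Reducing every $N$-preferring agent to a house $h'=N(a)$ that is a saturated $f$-house with $M(h')\subseteq f(h')$ and $a\notin M(h')$, then counting per house via $|N(h')\setminus M(h')|\le |M(h')\setminus N(h')|$ and using disjointness of the displaced sets, is exactly the right argument. In the necessity direction, invoking Lemma~\ref{lemma:f-hosue-max-match} for condition (1), the last-resort argument for agent-completeness, and the exclusion of houses ranked strictly between $f(a)$ and $s(a)$ are all fine.

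The genuine gap is the step you yourself flag as the crux: showing a popular $M$ cannot match $a$ strictly below $h^*=s(a)$ when $h^*$ is saturated. You defer this to an unspecified ``promotion path'' search backed by a Hall/deficiency-style argument, worrying about cycling and about reaching genuine slack. That machinery is not needed, and the worry signals the missing idea: condition (1), which you have already established at this point, caps the required augmentation at two edges. Concretely, since $h^*$ is saturated and, by the definition of $s(a)$, either is not an $f$-house or satisfies $|f(h^*)| < c(h^*) = |M(h^*)|$, some occupant $b\in M(h^*)$ has $f(b)\neq h^*$. Since $b$ is a first-choicer of the house $f(b)$ yet is not matched to it, condition (1) forces $|f(f(b))| > c(f(b))$, with $M(f(b))$ consisting of exactly $c(f(b))$ agents all having $f(b)$ as their first choice. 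Pick any $c\in M(f(b))$ (one checks $c\neq a$ and $c\neq b$), and form $M'$ from $M$ by setting $M'(a)=h^*$, $M'(b)=f(b)$, and leaving $c$ unmatched. Then $a$ and $b$ both prefer $M'$, only $c$ prefers $M$, and no other agent is affected, so $\phi(M',M)=2>1=\phi(M,M')$, contradicting popularity. This two-step rotation (the same device used by Abraham et al.\ in the unit-capacity case) is what your sketch is missing; without it the necessity direction is not proved.
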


\subsection{Switching Graph Characterization of CHA}
We now give a $\textit{switching graph}$ characterization of popular matchings for instances from this class. Our results are motivated from similar characterizations for HA in \cite{DBLP:journals/jco/McDermidI11} and for HAT in \cite{DBLP:conf/stacs/Nasre13}. A switching graph for an instance allows us to move from one popular matching to another by making well defined walks on the switching graph. 

Consider a popular matching $M$ of an instance $G$ of CHA. The switching graph of $G$ with respect to $M$ is a directed weighted graph $G_M = (\H, E_M)$, with the edge set $E_M$ defined as follows. For every agent $a \in \A$,
\begin{itemize}
\item add a directed edge from M(a) to $\{f(a),s(a)\} \setminus M(a)$,
\item if $M(a) = f(a)$, assign a weight of $-1$ on this edge, otherwise assign a weight of $+1$. 
\end{itemize}

\noindent
Associated with the switching graph $G_M$, we have an $\textit{unsaturation degree}$  function $u_M : \H \to \mathbb{Z}_{\ge 0}$, defined $u_M(h) = c(h) - |M(h)|$. A vertex $h$ is called $\textit{saturated}$ if its unsaturation degree is 0, i.e.\ $u_M(h) = 0$. If $u_M(h) > 0$, $h$ is called $\textit{unsaturated}$. We make use of the following terminology in the foregoing discussion. We now describe some useful properties of the switching graph $G_M$.

\begin{enumerate}
\item[$\rhd$] 
\textit{Property 1:} Each vertex $h$ can have out-degree at most $c(h)$. \\
\textit{Proof.} Each edge is from a matched house to an unmatched house and since the house $h$ has a maximum capacity $c(h)$, it can only get matched to at most $c(h)$ agents. \qed \vspace{3mm}

\item[$\rhd$] 
\textit{Property 2:} 
Let $M$ and $M'$ be two different popular matchings in $G$ and let $G_{M}$ and $G_{M'}$ denote the switching graphs respectively. For any vertex house $h$, the number of $-1$ outgoing edges from $h$ is invariant across $G_{M}$ and $G_{M'}$. The number of $+1$ incoming edges on $h$ is also invariant across $G_M$ and $G_{M'}$. \\
\textit{Proof.} 
From Lemma $\ref{lemma:f-hosue-max-match}$, in any popular matching, each $f$-house $h$ is matched to exactly $min\{|f(h)|, c(h)\}$ agents and this is also the number of outgoing edges with weight $-1$. A similar argument can be made for $+1$ weighted incoming edges. \qed \vspace{3mm}

\item[$\rhd$] 
\textit{Property 3:} No $+1$ weighted edge can end at an unsaturated vertex. \\
\textit{Proof.} If a $+1$ weighted edge is incident on a vertex $h$, this means that the house $h$ is an $f$-house for some agent $a$ that is still not matched to it in $M$. But if $h$ is unsaturated then it still has some unused capacity. The matching $M'$ obtained by just promoting $a$ to $h$ is popular than $M$, which is a contradiction. \qed \vspace{3mm}

\item[$\rhd$]
\textit{Property 4:} There can be no incoming $-1$ weighted edge on a saturated vertex if all its outgoing edges have weight $-1$. \\
\textit{Proof.} A $-1$ weighted edge on a vertex $h$ implies that the house $h$ is an $s$-house for some agent $a$. But if $h$ is saturated with all outgoing edges having a  weight of $-1$, then all the capacity of $h$ has been used up by agents who had $h$ as their first choice. But by definition, $h$ can not be an $s$-house for any other agent. \qed \vspace{3mm}

\item[$\rhd$]
\textit{Property 5:} For a given vertex $h$, if there exists at least one $+1$ weighted incoming edge, then all outgoing edges are of weight $-1$ and there can be no $-1$ weighted incoming edge on $h$. \\
\textit{Proof.} Let agent $a$ correspond to any $+1$ weighted incoming edge. Suppose $h$ has an outgoing $+1$ edge ending at a vertex $h'$ and agent $a'$ corresponds to this edge. We can promote agents $a$ and $a'$ to their first choices and demote any agent which is assigned house $h'$. This leads to a matching popular than $M$. Hence all outgoing edges from $h$ must be of weight $-1$. Further, Property 3 and Property 4 together imply that there can be no incoming edge on $h$ of weight $-1$. \qed 

\end{enumerate}

\subsubsection*{Switching Moves}
We now describe the operation on the switching graph which takes us from one popular matching to another. We make use of the following terminology with reference to the switching graph $G_M$. Note that the term ``path''  (``cycle'') implies a ``directed path'' (``directed cycle''). A ``$+1$ edge''(``$-1$ edge'') means an ``edge with weight $+1$'' (``edge with weight $-1$'').
\begin{itemize}
\item A path is called an \textit{alternating path} if it starts with a $+1$ edge, ends at a $-1$ edge and alternates between $+1$ and $-1$ edges.
\item A \textit{switching path} is an alternating path that ends at an unsaturated vertex.
\item A \textit{switching cycle} is an even length cycle of alternating $-1$ and $+1$ weighted edges.
\item A \textit{switching set} is a union of edge-disjoint switching cycles and switching paths, such that at most $k$ switching paths end a vertex of unsaturation degree $k$.
\item A \textit{switching move} is an operation on $G_M$ by a switching set $S$ in which, for every edge $e$ in $S$, we reversed the direction of $e$ and flip the weight of $e$ ($+1 \leftrightarrow -1$).
\end{itemize}

\noindent
Observe that every \textit{valid} switching graph inherently implies a matching (in the context of CHA) of $G$.

Let $G_M = (\H, E_M)$ and $G_{M'} = (\H, E_{M'})$ be the switching graphs associated with popular matchings $M$ and $M'$ of the CHA instance $G = (\A \cup \H, E)$. Observe that the underlying undirected graph of $G_M$ and $G_{M'}$ are same. We have the following.

\begin{theorem}
\label{thm:switch-main}
Let $S$ be the set of edges in $G_M$ that get reversed in $G_{M'}$. Then, $S$ is a switching set for $G_M$.
\end{theorem}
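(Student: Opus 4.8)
The plan is to read $S$ as a subgraph of the directed, weighted graph $G_M$ and to exhibit an explicit decomposition of it into alternating paths and cycles with the required endpoint behaviour. First I would record the basic dictionary: $S$ is exactly the set of agents $a$ with $M(a) \neq M'(a)$, and for each such $a$ we have $\{M(a), M'(a)\} = \{f(a), s(a)\}$ by Lemma~\ref{lemma:CHA-charac}; reversing every edge of $S$ (flipping both its direction and its weight) carries $G_M$ to $G_{M'}$, since the two graphs share the same underlying undirected edge set. Fix a house $h$ and split its incident $S$-edges, oriented as in $G_M$, into four classes by (direction, weight): incoming $+1$ ($p$ of them), incoming $-1$ ($q$), outgoing $+1$ ($r$), and outgoing $-1$ ($s$). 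Unwinding the definitions, $p$ and $s$ involve only agents in $f(h)$ (those using $h$ as a first choice), while $q$ and $r$ involve only agents with $s(a)=h$ (those using $h$ as a second choice).

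The heart of the argument is two counting identities at each $h$. For the first-choice edges, Lemma~\ref{lemma:f-hosue-max-match} gives $|M(h)\cap f(h)| = |M'(h)\cap f(h)| = \min\{c(h),|f(h)|\}$, so the set of $f(h)$-agents matched to $h$ has the same size in $M$ and $M'$; comparing the two sets shows that the number of such agents leaving $h$ equals the number entering it, i.e.\ $p=s$. For the second-choice edges, writing $X=\{a: s(a)=h,\ M(a)=h\}$ and $Y=\{a: s(a)=h,\ M'(a)=h\}$ gives $q-r=|Y|-|X| = |M'(h)|-|M(h)|$, since the (equal) first-choice contributions to $|M(h)|$ and $|M'(h)|$ cancel. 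Now I build the decomposition by choosing, at every $h$, a transition system: pair each incoming $+1$ edge with an outgoing $-1$ edge (possible precisely because $p=s$), and pair as many incoming $-1$ edges with outgoing $+1$ edges as possible. Following these transitions decomposes $S$ into edge-disjoint walks; because every transition joins a $+1$ edge to a $-1$ edge or vice versa, each walk alternates in weight, closed walks have even length (switching cycles), and each open walk begins with an unpaired outgoing $+1$ edge and ends with an unpaired incoming $-1$ edge --- exactly the form of a switching path.

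It then remains to check the endpoint/capacity condition. By the pairing, the number of paths terminating at $h$ is $\max\{0, q-r\} = \max\{0, |M'(h)|-|M(h)|\}$; when positive this equals $|M'(h)|-|M(h)| \le c(h)-|M(h)| = u_M(h)$, the inequality holding because $M'$ is a legal matching, so $|M'(h)|\le c(h)$. In particular every path ends at a vertex with $u_M(h)>0$ (an unsaturated vertex), and no more than $u_M(h)$ paths end there, which is precisely the switching-set condition. Finally, if one insists that switching paths be simple, any alternating trail that revisits a vertex can be split at that vertex into a shorter alternating trail plus a closed alternating (hence even) sub-walk; a short parity check shows the leftover trail still alternates, so the refinement only introduces additional legitimate switching cycles.

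I expect the main obstacle to be the bookkeeping that ties the local in/out balances to the global structure: proving $p=s$ cleanly from Lemma~\ref{lemma:f-hosue-max-match}, and above all identifying $q-r$ with $|M'(h)|-|M(h)|$ so that the capacity bound on path endpoints falls out of the legality of $M'$. Alternation of the resulting walks is then automatic from the construction, though one could alternatively derive it by applying Property~5 to both $G_M$ and $G_{M'}$; the delicate point is ensuring that the chosen transition system leaves unpaired only outgoing $+1$ edges (path starts) and incoming $-1$ edges (path ends), which is exactly what the identity $p=s$ guarantees.
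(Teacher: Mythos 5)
Your proof is correct, but it takes a genuinely different route from the paper's. The paper argues algorithmically: it first shows (Lemma~\ref{lemma:dir-cycle}, via Property~5 applied to both $G_M$ and $G_{M'}$) that every directed cycle in $S$ is a switching cycle, then runs the \texttt{Reduction} procedure that strips off switching cycles and afterwards repeatedly extracts a \emph{longest} alternating path, proving by contradiction --- using the invariance of $+1$ in-degrees and $-1$ out-degrees across switching graphs (Property~2) --- that each extracted path starts with a $+1$ edge, ends with a $-1$ edge, and terminates at an unsaturated vertex. You replace this global, greedy extraction with a purely local construction: the identity $p=s$ (which is Property~2 restricted to $S$, rederived directly from Lemma~\ref{lemma:f-hosue-max-match}) and the identity $q-r=|M'(h)|-|M(h)|$, followed by a transition-system (Eulerian-style) pairing at every vertex. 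What your version buys is an explicit verification of the capacity clause in the definition of a switching set: the number of paths ending at $h$ is $\max\{0,q-r\}=\max\{0,|M'(h)|-|M(h)|\}\le c(h)-|M(h)|=u_M(h)$, a point the paper compresses into ``a similar argument can be made.'' What the paper's version buys is lighter bookkeeping at the vertices, since maximality of the extracted path plus Property~2 does the work of your pairing argument.

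One caveat: in your final refinement to \emph{simple} paths and cycles, the ``short parity check'' alone does not suffice. If a vertex $h$ could simultaneously carry a transition pair of type ($+1$ in, $-1$ out) and one of type ($-1$ in, $+1$ out), then splitting a cyclically alternating closed walk at $h$ can produce two odd pieces that fail to alternate at the wrap-around (weights $-1,+1,-1$ and $+1,-1,+1$), and the edge set of such a walk admits no decomposition into simple alternating cycles at all. What rescues the refinement is Property~5: a vertex with an incoming $+1$ edge has only $-1$ outgoing edges and no incoming $-1$ edges, so all transition pairs at any fixed vertex have the same type, and then every piece produced by splitting does alternate cyclically. Since Property~5 is established in the paper before this theorem, this is a patchable omission, but it should be invoked at the splitting step rather than mentioned only as an ``alternative'' way to obtain alternation of the walks.
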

We prove this algorithmically in stages.

\begin{lemma}
\label{lemma:dir-cycle}
Every directed cycle in $S$ is a switching cycle of $G_M$.
\end{lemma}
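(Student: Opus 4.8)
A directed cycle $C$ in $S$ is, in particular, a simple directed cycle in $G_M$, so every vertex $h$ lying on $C$ has exactly one edge of $C$ entering it and exactly one edge of $C$ leaving it. Since every edge of the switching graph carries weight $+1$ or $-1$, the statement to be proved reduces to a single local claim: at each vertex of $C$, the incoming cycle-edge and the outgoing cycle-edge carry \emph{opposite} weights. Once this is established, the weights must alternate $+1,-1,+1,-1,\dots$ all the way around $C$, which both forces $C$ to have even length and exhibits it as an alternating cycle, i.e.\ exactly a switching cycle in the sense of the definition. I would therefore spend the whole proof on this local alternation claim, handling the two possibilities for the weight of the incoming edge separately.

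\noindent
The first case is immediate. Fix a vertex $h$ on $C$ and suppose its incoming cycle-edge has weight $+1$. Then $h$ has an incoming $+1$ edge in $G_M$, so Property 5 applies and forces \emph{every} outgoing edge of $h$ to have weight $-1$; in particular the (unique) outgoing cycle-edge of $h$ has weight $-1$, as required. Thus an incoming $+1$ edge is always followed by an outgoing $-1$ edge, using only the structure of $G_M$ itself.

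\noindent
The second case, where the incoming cycle-edge at $h$ has weight $-1$, is the main obstacle, because Property 5 (and likewise Properties 3 and 4) cannot by themselves rule out a vertex of $G_M$ with both an incoming $-1$ edge and an outgoing $-1$ edge. The key idea I would use is to pass to $G_{M'}$: since $M'$ is also popular, $G_{M'}$ satisfies all of Properties 1--5 as well, and by hypothesis $G_{M'}$ is obtained from $G_M$ by the switching move along $S$, so each edge of $C$ is reversed and has its weight flipped. Arguing by contradiction, suppose the incoming and outgoing cycle-edges at $h$ are \emph{both} $-1$ in $G_M$. After reversing and flipping, the former becomes an \emph{outgoing} $+1$ edge at $h$ in $G_{M'}$, while the latter becomes an \emph{incoming} $+1$ edge at $h$ in $G_{M'}$. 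But then $h$ has an incoming $+1$ edge in $G_{M'}$, so Property 5 applied to $G_{M'}$ forces all its outgoing edges to be $-1$, contradicting the outgoing $+1$ edge just produced. Hence an incoming $-1$ edge at $h$ must be followed by an outgoing $+1$ edge. Combining the two cases gives alternation at every vertex of $C$, completing the proof that $C$ is a switching cycle of $G_M$.
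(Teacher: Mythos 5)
Your proof is correct and follows essentially the same route as the paper's: rule out consecutive $+1$ edges at a vertex by applying Property~5 to $G_M$, and rule out consecutive $-1$ edges by reversing/flipping the edges of $S$ and applying Property~5 to $G_{M'}$, which forces alternation and hence even length. In fact your write-up makes explicit the step the paper only gestures at (why the $(-1,-1)$ configuration contradicts popularity of $M'$), so it is a somewhat more detailed rendering of the identical argument.
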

\begin{proof}
Let $C$ be any cycle in $S$. From Property 5 of switching graphs, we know that no vertex in $C$ can have an incoming edge and an outgoing edge of same weight $+1$. Similarly, since $S$ is the set of edges in $G_M$ which have opposite directions and opposite weights in $G_{M'}$, we observe that $S$ can not contain any vertex with incoming and outgoing edges both having weight $-1$ (again from Property 5). This forces the weights of cycle $C$ to alternate between $+1$ and $-1$. Moreover, this alternation forces the cycle to be of even length.
\end{proof}

\noindent
At this stage we apply the following algorithm to the set $S$.  \\

\noindent
\texttt{Reduction}$(S)$: \\
1. \texttt{while} (\textit{there exists a switching cycle $C$ in $S$}): \\
\hspace*{1mm} \textit{let $S := S \setminus C$} \\
2. \texttt{while} (\textit{$S$ is non-empty}): \\
\hspace*{1mm} $(a)$ \textit{find a longest path $P$ in $S$ which alternates between weights $+1$ and $-1$} \\
\hspace*{1mm} $(b)$ \textit{let $S := S \setminus P$} \\

\noindent
At the end of every iteration of the \texttt{while} loop in Step 1, Lemma $\ref{lemma:dir-cycle}$ still holds true. We now prove a very crucial invariant of the while loop in Step 2.

\begin{lemma}
In every iteration of the \textnormal{\texttt{while}} loop in Step 2 of the algorithm \textnormal{\texttt{Reduction}}, the longest path in step $2(a)$ is a switching path for $G_M$.
\end{lemma}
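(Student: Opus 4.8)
The plan is to run the induction that the algorithm itself suggests: I will show that at the start of each iteration of the Step~2 loop the current edge set is a DAG on which a certain ``balance'' invariant holds, that under these two invariants the longest alternating path is forced to be a switching path, and that removing such a path preserves both invariants, closing the induction. The base case is the graph just after Step~1: by Lemma~\ref{lemma:dir-cycle} every directed cycle of $S$ is a switching cycle, so once Step~1 has deleted all switching cycles the remainder contains no directed cycle at all. The balance invariant I maintain is that at every house $h$ the number of outgoing $-1$ edges equals the number of incoming $+1$ edges. This holds for the original $S$ because, by Property~2 (equivalently Lemma~\ref{lemma:f-hosue-max-match}), the quantity $|M(h)\cap f(h)|$ is the same in $M$ and $M'$, so the first-choice agents who leave $h$ are exactly compensated by those who arrive; a short case check then shows that deleting a switching cycle or a switching path leaves this equality intact.

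The structural heart is a vertex dichotomy obtained from Property~5 together with balance: if $h$ has an incoming $+1$ edge then Property~5 makes all its incoming edges $+1$ and all its outgoing edges $-1$ (call it type $P$), and otherwise balance forces it to have no outgoing $-1$ edge either, so all incoming edges are $-1$ and all outgoing edges are $+1$ (type $D$). Consequently every $+1$ edge runs from a type-$D$ vertex to a type-$P$ vertex and every $-1$ edge from a type-$P$ vertex to a type-$D$ vertex, so any directed path automatically alternates in weight and the word ``alternating'' in Step~$2(a)$ is free. Given this, the endpoint conditions for the longest path $P=v_0\to\cdots\to v_\ell$ follow from maximality: if the first edge were $-1$ then $v_0$ is type $P$ and has an incoming $+1$ edge, which, since the graph is acyclic, can be prepended to lengthen $P$; if the last edge were $+1$ then $v_\ell$ is type $P$ and balance hands it an outgoing $-1$ edge which can be appended. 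Either way we contradict the choice of $P$, so $P$ starts with $+1$, ends with $-1$, and ends at a type-$D$ vertex.

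The main obstacle is the remaining requirement that the terminal vertex $v_\ell$ be unsaturated, and this is the only place where the \emph{longest}-path choice is used essentially. I argue by contradiction. Maximality makes $v_\ell$ a sink of the current graph, so its excess $\mathrm{exc}(v_\ell):=(\#\,{-1}\text{ incoming})-(\#\,{+1}\text{ outgoing})$ equals its in-degree, hence is at least $1$. Reading $\mathrm{exc}$ in terms of loads, one checks that on the original $S$ it equals $|M'(h)|-|M(h)|$, that switching cycles change it nowhere, and that deleting a switching path raises it by $1$ at the start and lowers it by $1$ at the end. If $v_\ell$ were saturated then $|M'(v_\ell)|-|M(v_\ell)|\le 0$, and combining this with $\mathrm{exc}(v_\ell)\ge 1$ in the bookkeeping above forces strictly more already-removed switching paths to \emph{start} at $v_\ell$ than to end there; in particular some earlier-removed switching path $Q$ begins at $v_\ell$. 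But when $Q$ was deleted it was a longest path of the then-current graph, which still contained every edge of $P$; acyclicity guarantees that $P$ and $Q$ share no vertex other than $v_\ell$, and the $-1$ edge ending $P$ followed by the $+1$ edge starting $Q$ splice into a single alternating path strictly longer than $Q$, contradicting the maximality of $Q$. Hence $v_\ell$ is unsaturated, $P$ is a switching path, and since removing a switching path preserves both invariants the induction proceeds. I expect this last step to be the delicate one, and the concatenation with a previously removed path is exactly the device that makes the longest-path rule carry the argument.
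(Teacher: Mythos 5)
Your proof is correct, and it runs on the same two engines as the paper's --- Property~2, which you repackage as the per-vertex balance (number of outgoing $-1$ edges equals number of incoming $+1$ edges), and Property~5, which you sharpen into the $P$/$D$ vertex dichotomy that makes alternation automatic --- but it is organized quite differently and is substantially more complete. The paper's own proof also works stage-by-stage, but it only carries out the argument for the first condition, that the path starts with a $+1$ edge: it counts, at the start vertex, the $r$ cycles removed in Step~1 against the $m$ outgoing $-1$ edges still present and contradicts the invariance of incoming $+1$ edges from Property~2; the remaining two conditions (ending with a $-1$ edge, ending at an unsaturated vertex) are dismissed with ``a similar argument can be made.'' Your write-up shows that this dismissal is too quick: unsaturation of the terminal vertex is \emph{not} a mirror image of the start-vertex argument, because switching paths removed in earlier iterations may have \emph{started} at the terminal sink and inflated its in/out imbalance. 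Your $\mathrm{exc}$ bookkeeping isolates exactly this obstruction, and the splice of $P$ with an earlier removed path $Q$ (legitimate because edges only disappear over time, and acyclicity keeps $P$ and $Q$ internally disjoint) is the one place where the \emph{longest}-path rule, rather than mere maximality, does real work --- an observation absent from the paper. Your two explicit invariants (acyclicity after Step~1, balance throughout) also quietly repair a looseness in the paper's counting, which ignores earlier Step-2 paths passing through the start vertex $h_i$; balance is preserved by such pass-throughs, so your version absorbs them for free. What the paper's route buys is brevity; what yours buys is an actual proof of the endpoint conditions, which is where the real content of this lemma lies.
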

\begin{proof}
Let us denote the stages of the run of algorithm \texttt{Reduction} by $t$. Initially, at $t = 0$, before any of the while loops run, $S$ is exactly the difference of edges in $E_M$ and $E_M'$. Let the while loop in Step 1 runs $t_1$ times and the while loop in Step 2 runs $t_2$ times.

Let the current stage be $t = t_1 + i$. Let $P$ be the maximal path in step $2(a)$ at this stage. We show that $P$ starts with an edge of weight $+1$. For contradiction, let $(h_i,h_j)$ be an edge of weight $-1$ and that this is the first edge of path $P$. Let $a_{ij}$ be the agent associated with the edge $(h_i, h_j)$.

The Property 5 of switching sets precludes any incoming edge of weight $-1$ on the vertex $h_i$. Hence, no switching path could have ended at $h_i$ at any stage $t < t_1 + i$. Similarly, no switching cycle with an incoming edge $-1$ was incident on $h_i$ at an earlier stage. 

Let us assume that there were $r$ cycles that were incident at $h_i$ at $t = 0$. At stage $t = t_1 + i$, let the number of outgoing $-1$ edges be $m$. Hence at $t = 0$, $h_i$ had $r$ incoming $+1$ edges and $r + m$ outgoing $-1$ edges. But this would also imply that at $t = 0$, $h_i$ had $r + m$ incoming $+1$ edges in $G_{M'}$. This contradicts Property 2, requiring the number of incoming $+1$ edges to be constant in the switching graphs corresponding to different popular matchings.

A similar argument can be made for the fact that the path $P$ can only end at an edge with weight $-1$ and that $P$ ends at an unsaturated vertex.
\end{proof}

\noindent
The following theorem establishes the characterization for popular matchings in CHA.

\begin{theorem}
\label{thm:final-switch}
If $G_M$ is the switching graph of the CHA instance $G$ with respect to a popular matching $M$, then \\
(i) every switching move on $G_M$ generates another popular matching, and \\
(ii) every popular matching of $G$ can be generated by a switching move on $M$. 
\end{theorem}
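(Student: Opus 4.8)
The plan is to read a switching move as a collection of agent reassignments and then verify, for each direction, that the resulting object is again a popular matching via the characterization in Lemma~\ref{lemma:CHA-charac}. The key dictionary is this: the edge of $G_M$ corresponding to agent $a$ is directed out of $M(a)$ and into the other house of $\{f(a),s(a)\}$, so reversing it moves $a$ off its source house and onto its target house. Consequently, at a house $h$, every edge of the switching set $S$ pointing into $h$ brings one agent in, and every edge pointing out of $h$ takes one agent out. Moreover, an outgoing $-1$ edge at $h$ corresponds precisely to a first-choice agent currently assigned to $h$, while an incoming $+1$ edge corresponds to a first-choice agent of $h$ currently assigned elsewhere; this is the bridge to condition~1 of Lemma~\ref{lemma:CHA-charac}.

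For part~(i) I would first check that a switching move produces a valid matching $M'$. Since each agent contributes exactly one edge to $G_M$ and $S$ is edge-disjoint, every agent is reassigned at most once, so $M'$ is agent-complete and assigns each agent a single house in $\{f(a),s(a)\}$; this already gives condition~2 of Lemma~\ref{lemma:CHA-charac}. For capacities I would compute the net agent change at a house $h$: vertices internal to a path or lying on a cycle are balanced, while a path contributes $-1$ at its start and $+1$ at its (unsaturated) endpoint, so the net change at $h$ equals the number of paths of $S$ ending at $h$ minus the number starting at $h$, which is at most the number ending at $h$, which the switching-set condition caps at $u_M(h)$. Hence $|M'(h)|\le |M(h)|+u_M(h)=c(h)$. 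To finish popularity I would establish the crucial \emph{balance} claim: within $S$, each house $h$ has equally many incoming $+1$ edges and outgoing $-1$ edges. Because both switching cycles and switching paths alternate $+1$ and $-1$, and a path begins with a $+1$ edge and ends with a $-1$ edge, a short case analysis over the two possible local patterns at every vertex (including the two path endpoints) shows each vertex is individually balanced. Summing over the edge-disjoint pieces of $S$, the number of outgoing $-1$ edges at $h$ — equivalently the number of first-choice agents of $h$ assigned to $h$ — is unchanged, so it remains $\min\{|f(h)|,c(h)\}$ by Lemma~\ref{lemma:f-hosue-max-match}, which is exactly condition~1. Thus $M'$ is popular.

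For part~(ii) I would invoke Theorem~\ref{thm:switch-main} directly. Given any popular matching $M'$, the graphs $G_M$ and $G_{M'}$ share the same underlying undirected graph, and the set $S$ of edges that are reversed from $G_M$ to $G_{M'}$ is, by that theorem, a switching set for $G_M$. It only remains to observe that the switching move on $S$ actually reproduces $M'$: an edge lies in $S$ exactly when its agent $a$ has $M(a)\neq M'(a)$, in which case $\{M(a),M'(a)\}=\{f(a),s(a)\}$ and flipping the edge sends $a$ from $M(a)$ to $M'(a)$, while agents outside $S$ are untouched. Hence applying the switching move of $S$ to $M$ yields $M'$.

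I expect the main obstacle to be the popularity half of part~(i), and specifically the balance claim, since it is what guarantees that no first-choice agent is spuriously gained or lost at any house. The delicate points are handling the two endpoints of a switching path correctly (the $+1$-starting vertex and the unsaturated $-1$-ending vertex each turn out to be balanced for different reasons) and confirming that the capacity bound supplied by the switching-set definition matches exactly the unsaturation budget $u_M(h)$ needed above. Part~(ii), by contrast, is essentially a bookkeeping consequence of Theorem~\ref{thm:switch-main} once the flip-reproduces-$M'$ observation is spelled out.
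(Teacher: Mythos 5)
Your proof is correct and takes essentially the same route as the paper: part (i) is verified directly against the characterization of Lemma~\ref{lemma:CHA-charac} (agent-completeness, matching each agent within $\{f(a),s(a)\}$, and preservation of the first-choice counts at every house), and part (ii) is obtained by invoking Theorem~\ref{thm:switch-main}. If anything, your bookkeeping is more careful than the paper's own: your computation of the net change at a house (paths ending at $h$ minus paths starting at $h$, capped by $u_M(h)$) and your balance claim for incoming $+1$ versus outgoing $-1$ edges repair a sloppy step in the paper, which asserts $|M'(h)| = |M(h)|$ for every house --- a claim that is literally false at the two endpoints of a switching path, where the occupancy changes by one.
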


\begin{proof} \noindent
\begin{enumerate}
\item[(i)] We verify that the new matching generated by applying a switching move on $G_M$ satisfies the characterization in Lemma $\ref{lemma:CHA-charac}$. Call the new switching graph $G_{M'}$ and the associated matching $M'$. First, observe that $M'$ is indeed an agent complete matching since $G_{M'}$ still has a directed edge for each agent in $\A$. Next, each agent $a$ is still matched to $f(a)$ or $s(a)$ as the switching move either reverses an edge of $G_M$ or leaves it as it is. Finally, for each house $h$, $f(h) \subseteq M'(h)$ if $|f(h)| < c(h)$ and $|M'(h)| = c(h)$ with $M'(h) \subseteq f(h)$ otherwise. This is true because $|M'(h)| = |M(h)|$, by the definition of switching moves.

\item[(ii)] This is implied by Theorem $\ref{thm:switch-main}$.

\end{enumerate}
\end{proof}

\subsection{Hardness of Counting}
In this section we prove the \#P-hardness of counting popular matchings in CHA. We reduce the problem of counting the number of matchings in a bipartite graph to our problem.

Let $G = (A \cup B, E)$ be a bipartite matching instance in which we want to count the number of matchings. From $G$ we create a CHA instance $I$ such that the number of popular matchings of $I$ is same as the number of matchings of $G$.

Observe that a description of a switching graph gives the following information about its instance:
\begin{itemize}
\item the set of agents $A$,
\item for each agent $a \in \A$, it gives $f(a)$ and $s(a)$, and
\item for each $s$-house or $f$-house $h$, the unsaturation degree gives the capacity $c(h)$.
\end{itemize}

\noindent
Using this information, we can create the description of the instance $I$ so that it meets our requirement. For simplicity, we assume $G$ to be connected (as isolated vertices do not affect the count). We orient all the edges of $G$ from $A$ to $B$ and call the directed graph $G' = (A \cup B, E')$. Using $G'$, we construct a graph $S$, which will be the switching graph.

Let $|A| = n_1$, $|B| = n_2$ and $|E'| = m$. $S$ is constructed by augmenting $G'$. We keep all the vertices and edges of $G'$ in $S$ and assign each edge a weight of $-1$. Further, for each vertex $u \in A$, add a copy $u'$ and add a directed edge from $u'$ to $u$, and assign a weight of $+1$ to the edge. Call the new set of vertices $A'$.  The sets $A'$ and $B$ contain $s$-houses and the set $A$ contains $f$-houses. We label every vertex in $A'$ and $A$ as \textit{saturated} and for each vertex $v$ in $B$, we label $v$ as \textit{unsaturated} with \textit{unsaturation degree} 1. Hence, the switching graph $S$ has $2 n_1 + n_2$ vertices and $n_1 + m$ edges.

The CHA instance $I$ corresponding to the switching graph $S$ has $2 n_1 + n_2$ houses and $n_1 + m$ agents. Each agent has a preference list of length $2$ that is naturally defined by the weight of edges in $S$.

Let the popular matching represented by $S$ be $M_\phi$. This corresponds to the empty matching of $G$. Every non-empty matching of $G$ can be obtained by a switching move on $S$. We make this more explicit in the following theorem.

\begin{theorem}
The number of matchings in $G$ is same as the number of popular matchings in $I$.
\end{theorem}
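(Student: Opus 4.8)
The plan is to establish a bijection between the matchings of $G$ and the popular matchings of $I$ by passing through the switching-graph characterization of Theorem \ref{thm:final-switch}. Since $M_\phi$ is the popular matching represented by $S$, part (i) of that theorem says every switching move on $S$ yields a popular matching of $I$, and part (ii) (via Theorem \ref{thm:switch-main}) says every popular matching of $I$ arises as the switching move induced by a switching set of $S$. Because distinct switching sets reverse distinct edge sets and hence produce distinct switching graphs, and thus distinct matchings, the correspondence (popular matching of $I$) $\leftrightarrow$ (switching set of $S$) is a bijection, with $M_\phi$ corresponding to the empty set. It therefore suffices to show that the number of switching sets of $S$ equals the number of matchings of $G$.

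First I would analyze the structure of $S$. The only $+1$ edges are the edges $u' \to u$ for $u \in A$, and the only $-1$ edges are the oriented edges $u \to b$ of $G'$ with $u \in A$ and $b \in B$. Consequently every vertex of $A'$ has in-degree $0$ and out-degree $1$, every vertex of $B$ has out-degree $0$, and $S$ contains no directed cycle whatsoever; in particular it has no switching cycles. Next, any alternating path must begin with a $+1$ edge $u' \to u$ and continue with a $-1$ edge $u \to b$; since $b$ has no outgoing edge the path cannot be extended, and as $b$ is unsaturated this is a valid switching path. Hence every switching path of $S$ has the form $u' \to u \to b$ of length exactly two, and is determined by the single edge $(u,b)$ of $G$.

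Therefore a switching set $T$ of $S$ is precisely a collection of edge-disjoint paths of the form $u' \to u \to b$, subject to the constraint that at most $k$ of them end at a vertex of unsaturation degree $k$. I would then read off the two matching conditions from these two constraints: edge-disjointness forces the paths to use distinct $+1$ edges $u' \to u$, so their $A$-endpoints are pairwise distinct; and since every vertex of $B$ has unsaturation degree $1$, at most one path of $T$ may end at each $b$, so their $B$-endpoints are pairwise distinct. Mapping each path $u' \to u \to b$ to the edge $(u,b)$ thus sends $T$ to a set of edges of $G$ sharing no endpoint on either side, i.e.\ a matching of $G$; conversely every matching of $G$ yields exactly such a family of paths. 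This is the desired bijection between switching sets of $S$ and matchings of $G$, and composing it with the bijection of the first paragraph proves the claim.

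The main obstacle is the structural step: one must verify that the gadget forces every switching path to have length two and admits no switching cycle, and—most importantly—that the two defining constraints of a switching set (edge-disjointness together with the unsaturation-degree bound) translate \emph{exactly} into the two disjointness conditions, on the $A$-side and the $B$-side, that define a matching, so that the correspondence neither over- nor under-counts.
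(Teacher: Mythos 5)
Your proposal is correct and follows essentially the same route as the paper: both arguments invoke the switching-graph characterization (Theorems \ref{thm:switch-main} and \ref{thm:final-switch}) to identify popular matchings of $I$ with switching sets of $S$, observe that $S$ has no switching cycles and that every switching path has the form $u' \to u \to b$ of length two, and then match edge-disjoint families of such paths with matchings of $G$. Your write-up merely spells out details the paper leaves implicit (injectivity of the switching-set correspondence, and the translation of edge-disjointness plus the unsaturation-degree bound into the two disjointness conditions of a matching), which is a faithful elaboration rather than a different proof.
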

\begin{proof}
We prove this by showing that each matching in $G$ corresponds to a unique set of edge disjoint switching paths in the switching graph $S$ of $I$.

Consider a matching $M$ of $G$ and let $(u,v) \in M$. We look at the length $2$ directed path in $S$ that is obtained by extending $(u,v)$ in the reverse direction: $u' \rightarrow u \rightarrow v$ with $u' \in A'$. It's easy to see that this is a switching path for $I$. Moreover, the set of switching paths obtained from any matching of $G$ forms a valid switching set (as every pair of such paths arising from a matching are always edge disjoint). 

For the converse, observe that $S$ can only have switching paths of length $2$ and it has no switching cycles. An edge disjoint set of such paths corresponds to a matching of $G$. By the definition of $S$, it's easy to see every matching in $M$ can be obtained by a switching set of $S$.
\end{proof}


\ \\\noindent
\textit{Conclusions and Acknowledgement:}
Our main contribution is the \#-P hardness and an FPRAS  for the Capacitated House Allocation problem. We believe that the switching graph characterization may be useful in other problems in the setting of CHA.

We thank Meghana Nasre for fruitful discussions. We also thank anonymous reviewers for their input.
 \\



\bibliographystyle{plain}
\bibliography{popular}

\end{document}